\newif\ifwarntodo 
\newif\ifshowtodo
\tikzset{external/only named=true}
\pgfplotsset{compat=newest}
\title{Revisiting Accurate Geometry for Morse-Smale Complexes}
\author{Son Le Thanh\thanks{e-mail: sonlt@kth.se}\\ %
        \scriptsize KTH Royal Institute of Technology %
\and Michael Ankele\thanks{e-mail: michael.ankele@indurad.com}\\ %
        \scriptsize Indurad %
\and Tino Weinkauf\thanks{e-mail: weinkauf@kth.se}\\ %
        \scriptsize KTH Royal Institute of Technology}
\abstract{
    The Morse-Smale complex is a standard tool in visual data analysis. The classic definition is based on a 
continuous view of the gradient of a scalar function where its zeros are the critical points. These points are connected via 
gradient curves and surfaces emanating from saddle points, known as separatrices. In a discrete setting, the Morse-Smale complex 
is commonly extracted by constructing a combinatorial gradient assuming the steepest descent direction.
Previous works have shown that this method results in a geometric embedding of the separatrices that can be fundamentally different from
those in the continuous case. To achieve a similar embedding, different approaches for constructing a combinatorial gradient
were proposed. 
In this paper, we show that these approaches generate a different topology, i.e., the connectivity between critical points changes.
Additionally, we demonstrate that the steepest descent method can compute topologically and geometrically accurate Morse-Smale 
complexes when applied to certain types of grids. Based on these observations, we suggest a method to attain both geometric and 
topological accuracy for the Morse-Smale complex of data sampled on a uniform grid.
} 
\keywords{Discrete Morse theory, Morse-Smale complex, Topology, Accurate geometry.}
\begin{document}



\newcommand{\RRR}{{\mathrm I\! \mbox{R} }}
\newcommand{\EEE}{{\mathrm I\! \mbox{E} }}
\newcommand{\xx}{{\mathbf x}}
\newcommand{\yy}{{\mathbf y}}
\newcommand{\zz}{{\mathbf z}}
\newcommand{\dd}{{\mathbf d}}
\newcommand{\hh}{{\mathbf h}}
\newcommand{\ttt}{{\mathbf t}}
\newcommand{\jj}{{\mathbf j}}
\newcommand{\pp}{{\mathbf p}}
\newcommand{\qq}{{\mathbf q}}
\newcommand{\aaa}{{\mathbf a}}
\newcommand{\ssss}{{\mathbf s}}
\newcommand{\bb}{{\mathbf b}}
\newcommand{\ee}{{\mathbf e}}
\newcommand{\cc}{{\mathbf c}}
\newcommand{\nn}{{\mathbf n}}
\newcommand{\mm}{{\mathbf m}}
\newcommand{\kk}{{\mathbf k}}
\newcommand{\rr}{{\mathbf r}}
\newcommand{\uu}{{\mathbf u}}
\newcommand{\vv}{{\mathbf v}}
\newcommand{\ff}{{\mathbf f}}
\newcommand{\fff}{{\mathbf f}}
\newcommand{\JJ}{{\mathbf J}}
\newcommand{\DD}{{\mathbf D}}
\newcommand{\BB}{{\mathbf B}}
\newcommand{\CC}{{\mathbf C}}
\newcommand{\EE}{{\mathbf E}}
\newcommand{\PP}{{\mathbf P}}
\newcommand{\MM}{{\mathbf M}}
\newcommand{\LL}{{\mathbf L}}
\newcommand{\QQ}{{\mathbf Q}}
\newcommand{\FF}{{\mathbf F}}
\newcommand{\TT}{{\mathbf T}}
\newcommand{\RR}{{\mathbf R}}
\newcommand{\SSS}{{\mathbf S}}
\newcommand{\Sa}{{\mathbf {Sa}}}
\newcommand{\ZZ}{Z}
\newcommand{\cs}{{\dot c}}
\newcommand{\sss}{{\dot s}}
\newcommand{\ps}{{\dot p}}
\newcommand{\pg}{{\dot g}}
\newcommand{\ww}{{\mathbf w}}
\newcommand{\xxs}{\dot{\mathbf x}}
\newcommand{\vvs}{\dot{\mathbf v}}
\newcommand{\us}{{\dot u}}
\newcommand{\vs}{{\dot v}}
\newcommand{\ws}{{\dot w}}
\newcommand{\fn}{{\mathbf 0}}

\newcommand{\ii}{{\mathbf i}}
\newcommand{\Dt}{{\widetilde D}}
\newcommand{\fft}{{\widetilde \ff}}
\newcommand{\kkt}{{\widetilde \kk}}
\newcommand{\ggt}{{\widetilde \ggg}}
\newcommand{\hht}{{\widetilde \hh}}
\newcommand{\vvt}{{\widetilde \vv}}
\newcommand{\wwt}{{\widetilde \ww}}

\newcommand{\dbar}[1]{\bar{\bar{#1}}}
\newcommand{\qqqq}{\dbar{\qq}}

\newcommand{\eet}{\dbar{\ee}_{t}}

\newcommand{\eetau}{\dbar{\ee}_{\tau}}
%

\newcommand{\mNN}{\mathcal N}
\newcommand{\mEE}{\mathcal E}
\newcommand{\mSS}{\mathcal S}
\newcommand{\mRR}{\mathcal R}
\newcommand{\mPP}{\mathcal P}
\newcommand{\mGG}{\mathcal G}
\newcommand{\mHH}{\mathcal H}
\newcommand{\mVV}{\mathcal V}
\newcommand{\mTT}{\mathcal T}
\newcommand{\mLL}{\mathcal L}
\newcommand{\mBB}{\mathcal B}

\newcommand{\UF}{\texttt{UF}\ }

\newtheorem{theorem}{Theorem}
\newtheorem{lemma}{Lemma}
\newtheorem{corollary}{Corollary}
\newcommand{\magn}[1]{|#1|}
\newcommand{\cardinality}[1]{|#1|}
\newcommand{\norm}[1]{\left\lVert#1\right\rVert}

%
%




\ifwarntodo
	\newcommand{\warn}{\PackageWarning{}{Unprocessed note}}
\else
	\newcommand{\warn}{}
\fi

\newcommand{\todo}[1]{\warn%
\ifshowtodo%
\textbf{\small\textcolor{red}{TODO: #1}}%
\fi%
}

\newcommand{\annoauthor}[3]{\warn%
\ifshowtodo%
\textbf{\small\textcolor[rgb]{#1}{#2: \emph{#3}}}%
\fi%
}

\newcommand{\son}[1]{\annoauthor{0.2,0.2,0.8}{Son}{#1}}
\newcommand{\tino}[1]{\annoauthor{0.8,0.2,0.2}{Tino}{#1}}

\newcommand{\changedCamera}[1]{#1}
\newcommand{\changed}[1]{#1}
\newcommand{\allchanged}{}
\newcommand{\alladded}{}
\newcommand{\allsame}{}
\newcommand{\added}[1]{#1}

\newlength{\lengthgoodgap}
\addtolength{\lengthgoodgap}{12pt}
\newcommand{\goodgap}{\hspace{\lengthgoodgap}}

\newcommand{\goodnewline}{\\[0.5\lengthgoodgap]}

\newlength{\twopicwidth}
\addtolength{\twopicwidth}{0.5\textwidth}
\addtolength{\twopicwidth}{-0.5\lengthgoodgap}

\newlength{\threepicwidth}
\addtolength{\threepicwidth}{0.333333\textwidth}
\addtolength{\threepicwidth}{-0.666666\lengthgoodgap}

\newlength{\fourpicwidth}
\addtolength{\fourpicwidth}{0.25\textwidth}
\addtolength{\fourpicwidth}{-0.75\lengthgoodgap}

\newlength{\fivepicwidth}
\addtolength{\fivepicwidth}{0.20\textwidth}
\addtolength{\fivepicwidth}{-0.80\lengthgoodgap}

\newlength{\sixpicwidth}
\addtolength{\sixpicwidth}{0.166666666666666667\textwidth}
\addtolength{\sixpicwidth}{-0.833333333333333333\lengthgoodgap}

\newlength{\ltwopicwidth}
\addtolength{\ltwopicwidth}{0.5\linewidth}
\addtolength{\ltwopicwidth}{-0.5\lengthgoodgap}

\newlength{\lthreepicwidth}
\addtolength{\lthreepicwidth}{0.333333\linewidth}
\addtolength{\lthreepicwidth}{-0.666666\lengthgoodgap}

\newlength{\lfourpicwidth}
\addtolength{\lfourpicwidth}{0.25\linewidth}
\addtolength{\lfourpicwidth}{-0.75\lengthgoodgap}

\hfuzz=3pt

\def\chapterautorefname{Chapter}
\def\sectionautorefname{Section}
\def\subsectionautorefname{Section}
\def\figureautorefname{Figure}
\def\subfigureautorefname{Figure}
\def\tableautorefname{Table}
\def\equationautorefname~#1\null{(#1)\null}

\newcommand{\Autoref}[1]{%
  \begingroup%
  \def\chapterautorefname{Chapter}%
  \def\sectionautorefname{Section}%
  \def\subsectionautorefname{Section}%
  \def\figureautorefname{Figure}%
	\def\subfigureautorefname{Figure}
  \def\tableautorefname{Table}%
  \def\equationautorefname~##1\null{Equation~(##1)\null}%
  \autoref{#1}%
  \endgroup%
}

\newcommand{\SupplementalMaterialHeadWithNames}[1]{%
\begin{center}
Supplemental Material -- #1
\end{center}

\vspace{-1.5\baselineskip}
\section*{\centering Temporal Merge Tree Maps:\\A Topology-Based Static Visualization for Temporal Scalar Data}
\subsection*{\centering Wiebke Köpp and Tino Weinkauf}
}

\newcommand{\resetlength}[2]{\ifx#1\undefined \newlength{#1} \fi \setlength{#1}{#2} }

%
%





\maketitle

\section{Introduction} 
\firstsection{Introduction}

\maketitle


The Morse-Smale complex \cite{morse34,smale1961gradient}
has proven to be a powerful tool
for topological data analysis
and has found its applications
in many fields
such as in material science \cite{gyulassy07distancefields},
fluid dynamics \cite{sahner07a, guenther12b},
computer graphics \cite{weinkauf09b},
or in molecular biology \cite{cazals03}.

Forman's discrete Morse theory \cite{forman02}
gives a formidable setting
for computing the Morse-Smale complex
in a combinatorial setting \cite{robins10,guenther12a,gyulassy14conforming}.
The essential component of
these algorithms is the computation of the discrete gradient field,
which implicitly encodes the structure of the Morse-Smale complex.
The remarkably efficient method by Robins et al.\ \cite{robins10}
can be considered the \emph{de facto} standard
for the calculation of the discrete gradient field.
The method aligns its discrete vectors locally with the steepest descent direction.

It has been shown
by Gyulassy et al.\ \cite{gyulassy12ms} and Reininghaus et al.\ \cite{reininghaus2012combinatorial}
that the geometric embedding
of the discrete Morse-Smale complex
does not coincide with its continuous counterpart
when using the steepest descent direction in Robins' method
-- even when increasing the resolution of the discretization.
This has been investigated for uniform grids,
but not for other types of grids.
Instead,
Gyulassy et al.\ \cite{gyulassy12ms} and Reininghaus et al.\ \cite{reininghaus2012combinatorial}
propose a new method
to compute the discrete gradient field,
which replaces the steepest descent direction in Robins et al.\ \cite{robins10}
with a probabilistic choice of descending directions.
This achieves a better geometric embedding
of the Morse-Smale complex.

However,
as we will show in this paper,
moving in directions other than the steepest descent
will lead to a different topology.
Specifically,
the connectivity between critical points
differs.
We show
that these differences
can be found in many scenarios,
from small data sets to large ones,
from smooth data sets to noisy ones,
from artificial data sets to real-world data sets.
We provide a theoretical discussion
that reveals which layers of the cell complex
are affected
by the probabilistic approach
of Gyulassy et al.\ \cite{gyulassy12ms} and Reininghaus et al.\ \cite{reininghaus2012combinatorial}.

To provide solutions,
we investigate different types of grids
for their susceptibility to distortions of the geometric embedding.
Based on these observations,
we propose a method
to convert a uniform grid
to a specific triangle grid
that provides as good of a geometric embedding as
the probabilistic approaches
of Gyulassy et al.\ \cite{gyulassy12ms} and Reininghaus et al.\ \cite{reininghaus2012combinatorial},
but can be used with the steepest descent method of Robins et al.\ \cite{robins10},
thereby inheriting its topological guarantees.

Our contributions are as follows:
\begin{itemize}

\item
\changed{We provide a novel, systematic analysis of the 
differences in the topology of Morse-Smale complexes
produced by various methods}
(\Cref{section-TopologicalInconsistencies}).
These changes happen across a wide range of scenarios 
and can not be removed by the means of topological simplification. 

\item
We investigate different types of grids
and identify \changed{empirically} the properties
that a grid needs to have
such that the geometric embedding obtained using the steepest descent method
aligns with the continuous case (\Cref{section-EffectOfSampling}).

\item
We propose a method to achieve both geometric accuracy and topological consistency
for data sampled on a uniform grid
by converting it to a specific type of triangle grid
and applying the steepest descent method (\Cref{section-Suggestion}). 

\end{itemize}

Related work and theoretical background
are discussed in the following two sections,
with a focus on the topic of accurate geometry
in \Cref{section-AccurateGeometryForMSComplexes}.

\section{Background}
\label{section-preliminaries}

We briefly go over the relevant concepts in the following.

\subsection{Morse function and Morse-Smale complex}
\label{section-MS Complex}

Morse theory \cite{milnor63} allows to analyze the topology of a manifold by examining a function defined on that manifold.
Let $f: \MM \rightarrow \RR$ be a function defined on a $d$-manifold with boundary $\MM$, $\pp \in \MM$ is a \emph{critical point} of $f$ 
if $\nabla f(\pp) = 0$. If the Hessian of $f$ at $\pp$, $H(\pp)$, is non-singular, i.e. $|H(\pp)| \neq 0$, then $\pp$ is called a
\emph{non-degenerate critical point}.
The function $f$ is called a \emph{Morse function} if all of its critical points are non-degenerate.
Given a Morse function $f$ with a non-degenerate critical point $\pp$,
the result of the Morse lemma states there exists a local coordinate in the
neighborhood of $\pp$ such that $f$ can take the quadratic form
$f(\xx) = f(\pp) - x_1^2 - \ldots - x_\gamma^2 +x_{\gamma + 1}^2 + \ldots + x_d^2$.
The value of $\gamma$ in this formula is called the \emph{index} of $f$ at $\pp$ and can be used to characterize the type of the critical points of $f$.
In the case of $2$ dimensions,
the indices $0$, $1$, and $2$
correspond to the minima, saddle points, and maxima of the function, respectively.
For volumetric data,
minima have index $0$,
$1$-saddles have index $1$,
$2$-saddles have index $2$,
and maxima have index $3$.

The tangent vector
of a \emph{tangent curve}
$\phi: \RR \rightarrow \MM$
agrees with the gradient $\nabla f$
at every point along the line.
In other words,
$\phi$ is the solution of the equation
$\dfrac{\partial}{\partial t}\phi(t) = \nabla f(\phi(t))$,
where $\phi(0) = \pp$.
For each tangent curve,
the point $\underset{t \rightarrow - \infty}{\text{lim}} \phi(t)$
is called
the \emph{source} or \emph{origin} of $\phi$,
whereas $\underset{t \rightarrow  \infty}{\text{lim}} \phi(t)$
is called the \emph{destination} or \emph{sink}.
The sets of tangent curves
having the same source or sink
are called \emph{ascending} and \emph{descending} manifolds, respectively. 

A function $f$ fulfills the \emph{Morse-Smale condition}
if its ascending and descending manifolds
intersect transversally for each pair of critical points. 
If $f$ satisfies such condition,
the intersection of the ascending and descending manifolds
defines a structure known as the \emph{Morse-Smale complex}.
It segments $\MM$ into regions with monotone gradient flow behaviors,
i.e.,
all gradients inside a Morse-Smale cell
share the same source and destination.
The boundaries of the Morse-Smale regions
are called \emph{separatrices},
the unique tangent curve connecting a pair of 
critical points with consecutive indices.

\subsection{Discrete Morse theory}
\label{section-DiscreteMorseTheory}

Discrete Morse theory
developed by Forman \cite{forman02}
aims to describe
the concepts and properties
of the original Morse theory
in a combinatorial fashion.
It is widely adopted
as the foundation
for efficient extraction schema
for the Morse-Smale complex.
We review only the necessary 
concepts that will be mentioned later on.

A $d$-cell is a topological space that is homeomorphic
to the closed $d$-ball $\BB^d = \{\xx \in \EE^d |  \norm{\xx} \leq 1 \}$,
where $\EE^d$ denotes the $d$-dimensional Euclidean space.
For instance, a vertex is a $0$-cell,
an edge that connects two vertices is a $1$-cell,
and a polygon of any shape is a $2$-cell.
Given a $d$-cell $\alpha^{(d)}$,
we will write this cell as $\alpha$
if the dimension is not necessary for the context.
A cell $\alpha^{(d_1)}$ is called a \emph{face} of another cell $\beta^{(d_2)}$, denoted by $\alpha^{(d_1)} \preceq \beta^{(d_2)}$,
if $d_1 \leq d_2$
and the $0$-cells of $\alpha^{(d_1)}$
are a subset
of the $0$-cells of $\beta^{(d_2)}$.
In this case,
we also say that
$\beta^{(d_2)}$ is a \emph{co-face} of $\alpha^{(d_1)}$. 

A \emph{cell complex} $K$
is a collection of cells
such that the intersection
between any two cells is either empty or a common face of both.
To give some examples,
uniform and curvilinear grids
as well as
triangle and tetrahedral meshes
are cell complexes.
The $d$-skeleton of $K$ is the collection
of all cells whose dimension is not larger than $d$.
Let $f: K \rightarrow \RR$ be a function defined on a cell complex $K$. For any face $\alpha \in K$, 
the \emph{lower star} of $\alpha$, denoted by $\text{St}^-(\alpha) = \{\sigma \in K| \alpha \preceq \sigma \wedge f(\sigma) \leq  f(\alpha)\}$, is the
collection of \changed{co-}faces of $\alpha$ whose values are not exceeding that of $\alpha$.

Let $f: K \rightarrow \RR$ be a function
that assigns a value to every cell of $K$.
If for every face
$\alpha^{(d)} \in K$
the following holds
\begin{align}
\left| \left\{ \beta^{(d+1)}| \alpha^{(d)}  \preceq \beta^{(d+1)} \wedge f(\alpha) \geq f(\beta) \right\} \right| &\leq 1,\\
\left| \left\{ \gamma^{(d-1)} | \gamma^{(d-1)} \preceq \alpha^{(d)} \wedge f(\gamma) \geq f(\alpha) \right\} \right| &\leq 1,
\end{align}
then $f$ is called a \emph{discrete Morse function}.
If a cell $\alpha^{(d)}$ satisfies
\begin{align}
\left| \left\{ \beta^{(d+1)}| \alpha^{(d)} \preceq \beta^{(d+1)} \wedge f(\alpha) \geq f(\beta) \right\} \right| &= 0,\\
\left| \left\{ \gamma^{(d-1)} | \gamma^{(d-1)} \preceq \alpha^{(d)} \wedge f(\gamma) \geq f(\alpha) \right\} \right| &= 0,
\end{align}
then $\alpha^{(d)}$ is a critical cell with the index $d$.
A \emph{discrete vector} is a pair $\{\alpha^{(d)}, \beta^{(d+1)}\}$
such that $\alpha \preceq \beta$.
The collection $V$ of discrete vectors
where each cell of $K$ appears in at most in one vector
is called a \emph{discrete vector field}.
Given a discrete vector field $V$,
a \emph{$V$-path} is an alternating sequence of cells
\begin{align}
  \alpha_0^{(d)}, \beta_0^{(d+1)}, \alpha_1^{(d)}, \beta_1^{(d+1)}, \alpha_2^{(d)}, \ldots, \beta_\ell^{(d+1)}, \alpha_{\ell+1}^{(d)} 
\end{align}
such that $\{\alpha, \beta \} \in V$ and $\alpha_i \neq \alpha_{i+1} \preceq \beta_i$.
If $\ell \geq 0$ and $\alpha_0 = \alpha_{\ell+1}$ then the path is a 
\emph{non-trival closed path}.
If the discrete vector field $V$ does not contain any such closed path,
then $V$ is called a \emph{discrete gradient field} of $f$.
This is analogous to the continuous gradient of a function
defined on $K$.
Thus, the $V$-paths are comparable
to the tangent curves in the original Morse theory.
A $V$-path of a discrete gradient field
starts or terminates at critical cells.
Hence, we can define the \emph{discrete Morse-Smale complex} similarly 
to the continuous counterpart.

Using discrete Morse theory
provides a framework to design algorithms
to extract the Morse-Smale complex
with higher robustness and reliability.
The combinatorial nature
of this theory
enables handling degenerate cases
such as plateau regions that can appear when dealing with data.

\subsection{Topological Simplification}
\label{section-Simplification}

The goal of topological simplification
is to remove small-scale, spurious features,
thereby enabling a meaningful analysis of the data.
In the context of Forman's discrete Morse theory,
simplification of a discrete Morse function
can be done by removing a pair of connected critical points
and updating the connections of the neighbors of these two critical points.
Alternatively,
Forman \cite{forman02} suggests simplification by reversing the gradient vector field
along the path connecting two critical cells.
This approach implicitly
cancels the two mentioned critical cells
while resulting in a new valid discrete Morse function.
One can use the \emph{height difference} between two critical cells,
which is the absolute difference in the value of these cells,
to guide the simplification.
The use of the height difference is closely related to the concept of 
\emph{persistence} \cite{edelsbrunner02}.
See Günther et al.~\cite{guenther14d} for a detailed discussion.

\section{Accurate Geometry for Morse-Smale Complexes}
\label{section-AccurateGeometryForMSComplexes}

\newlength{\anafunconeimgwidth}
\setlength{\anafunconeimgwidth}{0.32\textwidth}

\begin{figure*}[t]%
\begin{subfigure}[t]{0.985\anafunconeimgwidth}%
\includegraphics[width=\linewidth]{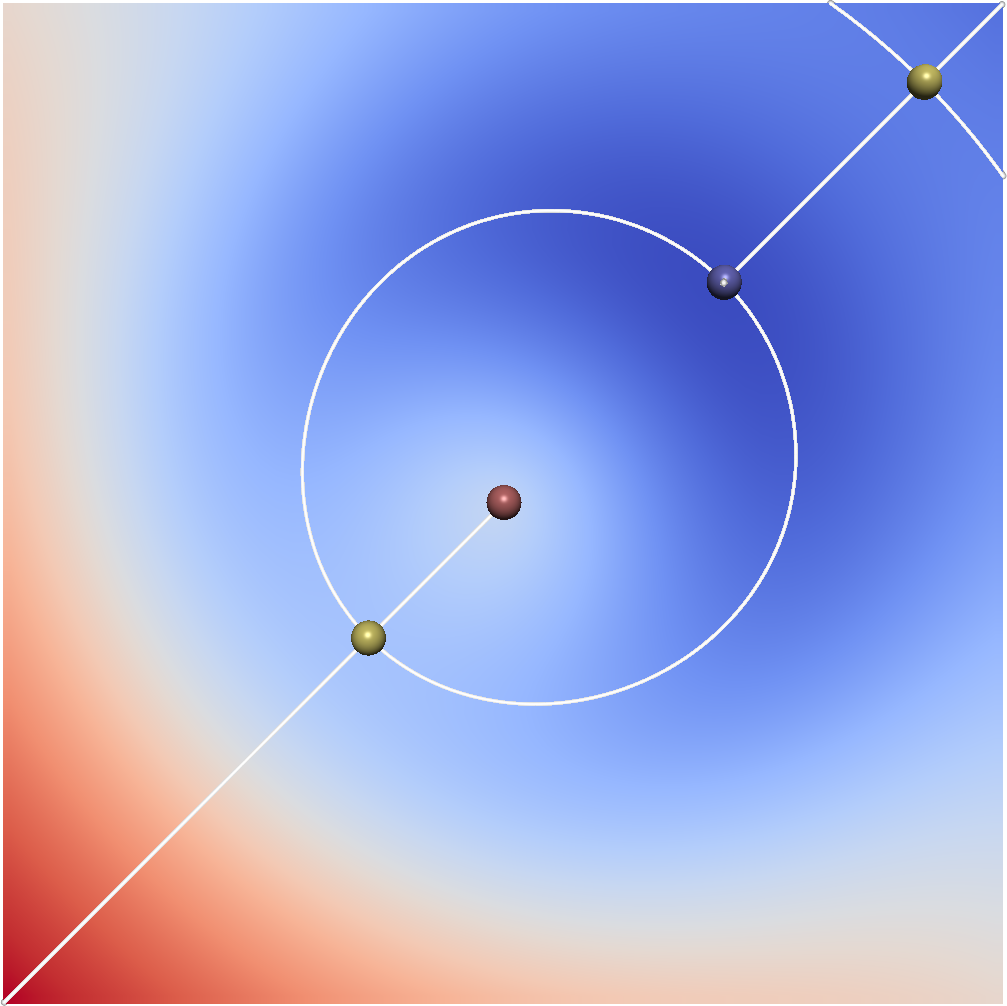}%
\caption{Continuous topology.}%
\label{fig-AnaFunc1Cont}%
\end{subfigure}%
\hfill%
\begin{subfigure}[t]{\anafunconeimgwidth}%
\includegraphics[width=\linewidth]{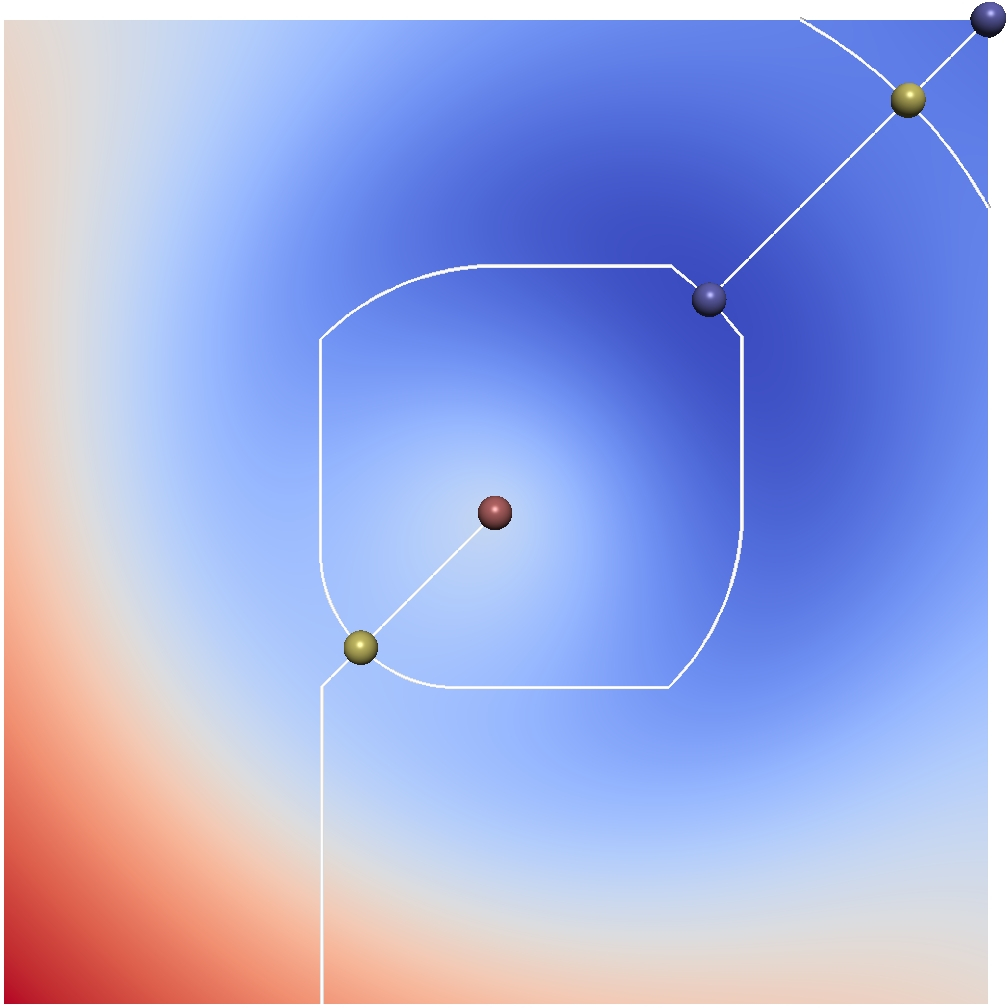}%
\caption{Steepest descent method \cite{robins10} on the uniform grid.}%
\label{fig-AnaFunc1Robins}%
\end{subfigure}%
\hfill%
\begin{subfigure}[t]{\anafunconeimgwidth}%
\includegraphics[width=\linewidth]{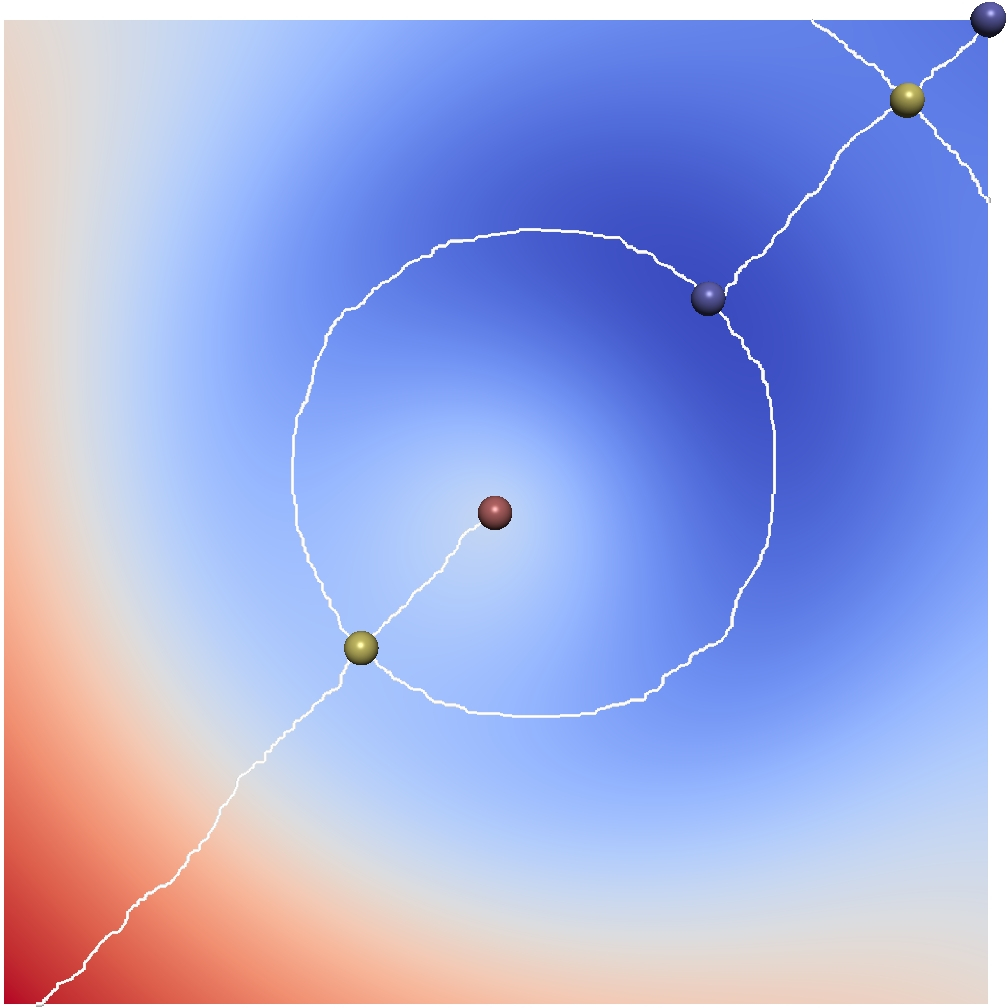}%
\caption{Probabilistic method \cite{gyulassy12ms, reininghaus2012combinatorial} on the uniform grid.}%
\label{fig-AnaFunc1Jans}%
\end{subfigure}%
\\%
\begin{subfigure}[t]{0.985\anafunconeimgwidth}%
\includegraphics[width=\linewidth]{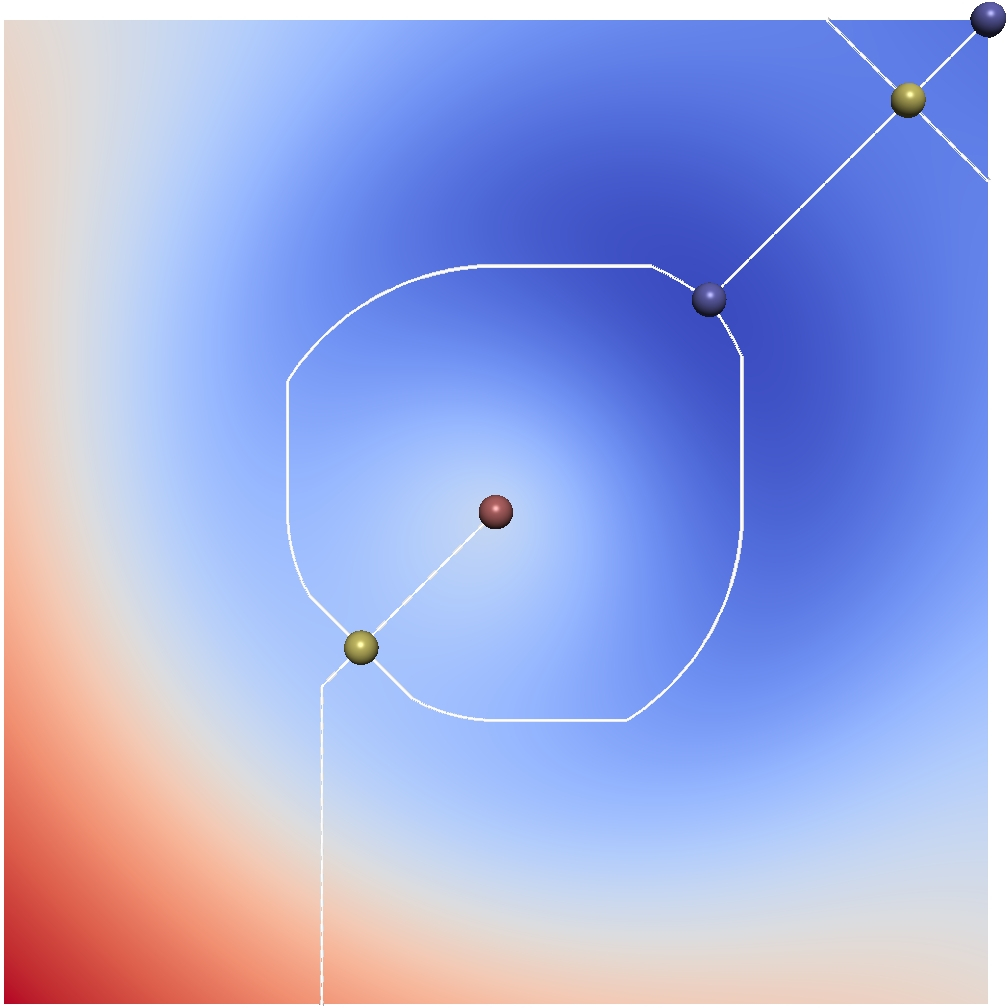}%
\caption{Steepest descent method \cite{robins10} on a triangle grid obtained by diagonally dividing each cell of a uniform grid once as shown in \Cref{fig-InducedGrid}.}%
\label{fig-AnaFunc1Triangular}%
\end{subfigure}%
\hfill%
\begin{subfigure}[t]{\anafunconeimgwidth}%
\includegraphics[width=\linewidth]{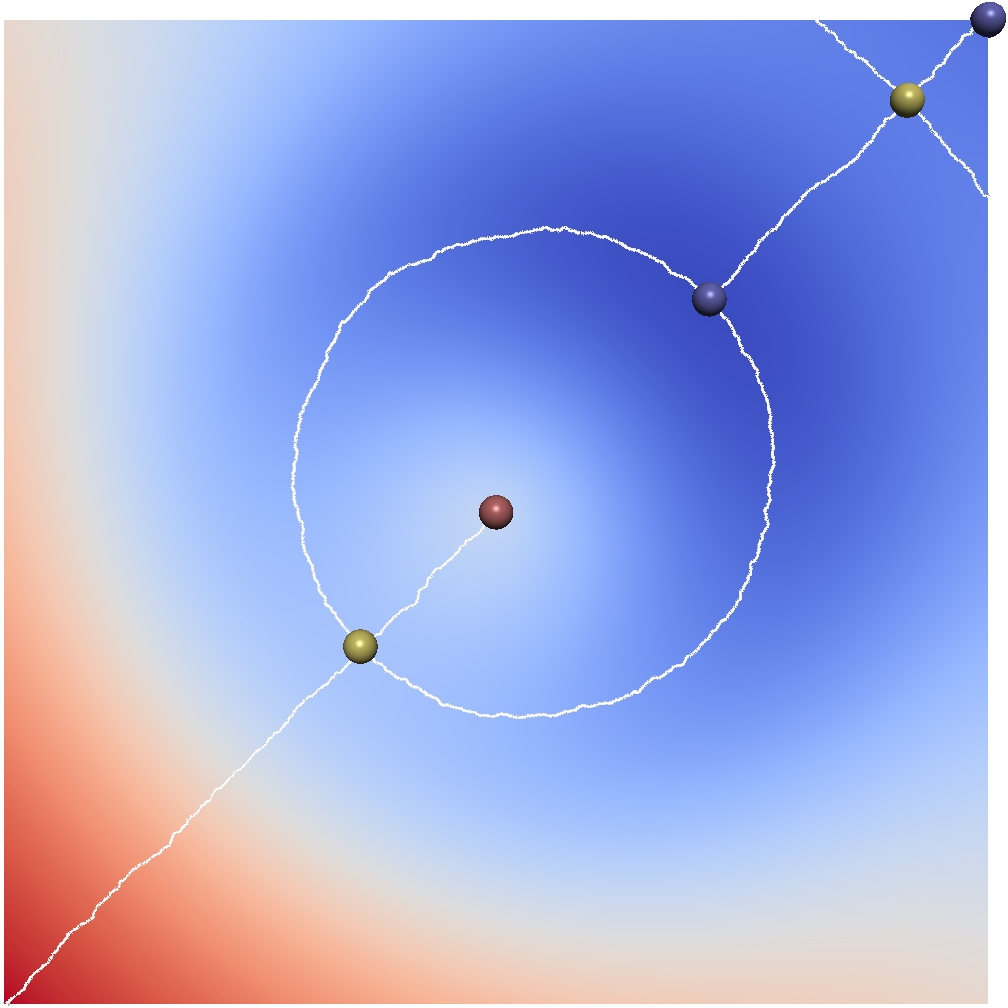}%
\caption{Steepest descent method \cite{robins10} on a triangle grid with random vertex positions and Delaunay triangulation as shown in \Cref{fig-RandomGrid}.}%
\label{fig-AnaFunc1RandomDelaunay}%
\end{subfigure}%
\hfill%
\begin{subfigure}[t]{\anafunconeimgwidth}%
\includegraphics[width=\linewidth]{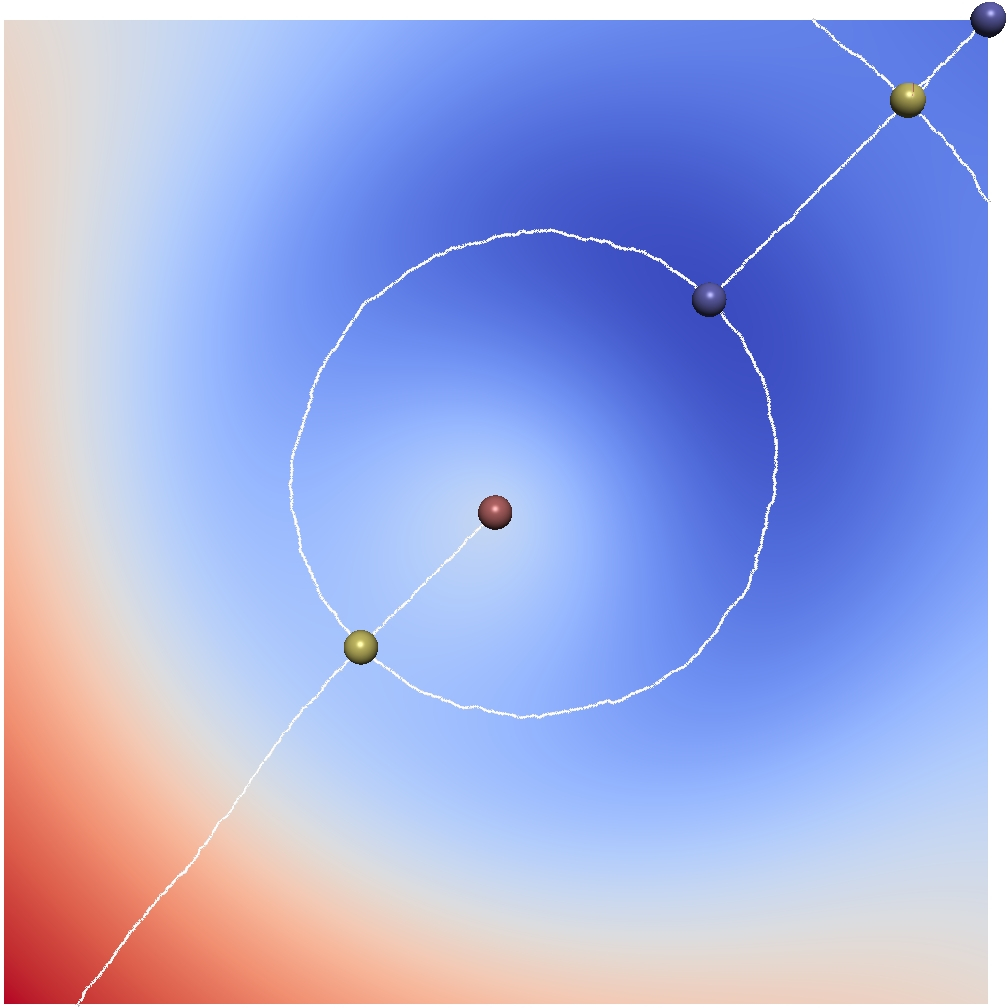}%
\caption{Steepest descent method \cite{robins10} on a triangle grid derived from the uniform grid following our suggestion, as illustrated in \Cref{fig-ModifiedGrid} and discussed in \Cref{section-Suggestion}.}%
\label{fig-AnaFunc1MSSuggestion}%
\end{subfigure}%
\caption{The function from \Cref{formula-AnaFunc1} has been sampled on different types of grids and the Morse-Smale complex has been extracted using different methods. The goal of \emph{accurate geometry} is to achieve a geometric embedding that is similar to the one from continuous topology. While it is well-known that the steepest descent method is not able to achieve this on uniform grids, we show that it can achieve \emph{accurate geometry} on certain grid types and provide a suggestion of how to convert uniform grids accordingly.}%
\label{fig-AnaFunc1MS}%
\end{figure*}

The term \emph{accurate geometry}
was coined by
Gyulassy et al. \cite{gyulassy12ms}
to refer to discrepancies
between the geometric embeddings
of the continuous and the discrete Morse-Smale complex.
The underlying idea is
that the geometric embedding
of the discrete Morse-Smale complex
should converge to its continuous counterpart,
if the discretization of the domain
becomes infinitely fine.
This is not always the case \cite{gyulassy12ms, reininghaus2012combinatorial},
but quite desirable in a number of applications,
e.g., where separatrices are the main features of interest.
We will discuss the previous work on this topic in the following.

Given a discrete gradient field,
the Morse-Smale complex
can be constructed straightforwardly
using an alternating breadth-first search \cite{robins10}.
The main challenge is rather
to compute the discrete gradient field itself
from image or volumetric data
such that the discrete Morse-Smale complex
will align with its continuous counterpart.
Several methods have been proposed on this subject \cite{lewiner03, gyulassy08, robins10}. 
The first provably correct method
proposed by Robins et al. \cite{robins10}
solves the problem efficiently
and is regarded as the \emph{de facto} standard
for this task:
given an image or volume data that can be thought of as a complex, 
this algorithm grows from the vertices to every cell of the complex by partitioning the whole domain into disjoint lower stars of the vertex.
For each lower star, 
the first gradient vector
is chosen from the possible vectors
based on the \emph{steepest descent} direction.
The rest of the vectors
in this lower star
will be constructed via a procedure called \emph{simple homotopy expansion}
by collecting cells with exactly one uncovered face and choosing the first possible pair 
according to an order.
This procedure will be performed until such cells can no longer be found.
If the expansion is not possible, a cell will be recorded 
as critical and the expansion continues to proceed from this cell until every cell in the lower star either belongs to a vector or is critical.

However,
Gyulassy et al. \cite{gyulassy12ms}
and
Reininghaus et al. \cite{reininghaus2012combinatorial}
show independently
that the choice of the \emph{steepest descent} direction 
leads to a non-convergent geometric embedding of the separatrices,
i.e., inaccurate geometry.
This is due to the local and greedy nature of the approach.
Consider the following analytic function 
$f: [0,2]^2 \rightarrow \RR $ from Reininghaus et al. \cite{reininghaus2012combinatorial}
\begin{equation}
f(x,y) = \exp(-2(\sqrt{x^2 + y^2}-1)^2) - 0.3(x+y).
\label{formula-AnaFunc1}
\end{equation}
It represents a circle engraved in a tilted plane.
The continuous Morse-Smale complex reveals this shape in \Cref{fig-AnaFunc1Cont}.
To extract the discrete counterpart,
we sampled this function
on a $1024\times 1024$ uniform lattice grid
and
applied the method of Robins et al. \cite{robins10}
using the steepest descent approach.
As \Cref{fig-AnaFunc1Robins} shows,
the result does not capture the circular shape,
neither for this nor any other sampling resolution.

Gyulassy et al. \cite{gyulassy12ms}
and
Reininghaus et al. \cite{reininghaus2012combinatorial}
tackle this problem using a probabilistic approach:
instead of always choosing the steepest-descent vector,
any descending vector can be chosen
following a probability
that relates back to the continuous gradient.
This means,
the vectors in the discrete gradient field
are more likely to be aligned
with the continuous tangent curves.
This rather small modification
to the original method of Robins et al. \cite{robins10}
produces Morse-Smale complexes 
which will converge to the continuous version
as the sampling resolution increases.
The result can be seen in \Cref{fig-AnaFunc1Jans}.

Gyulassy et al. \cite{gyulassy12ms} 
introduced a second approach to geometric accuracy
that takes into account a larger region when constructing the gradient vector field.
It is of global nature and comes with a substantial computational effort.
Gyulassy et al. \cite{gyulassy14conforming,gyulassy19ms} propose methods
to create \emph{conforming} Morse-Smale complexes:
they are not just based on the input scalar data,
but additional information can be supplied by the user
to influence the output.
Specifically, a discrete vector will be created
between two cells,
if these cells have equal value with respect to a map $L$.
Using this map, one can encode extra information for the 
computation or even modification of the Morse-Smale complex,
according to the needs of the application.
Since a strict adherence to the original Morse-Smale complex
is not desired when applying these methods,
we will not discuss them further in this paper,
but rather leave it to future work to investigate the similarities and differences to the steepest descent method.

\section{Topological Inconsistencies}
\label{section-TopologicalInconsistencies}

We can use the probabilistic methods \cite{gyulassy12ms,reininghaus2012combinatorial}
from \Cref{section-AccurateGeometryForMSComplexes}
to achieve geometrical accuracy. 
However, these methods come with a significant drawback
that has not been reported before:
the resulting Morse-Smale complex
differs \changed{topologically} from the one obtained using steepest descent.
Most importantly,
the connectivity of the separatrices
differs, meaning,
different critical points are connected with each other.
This holds true for 2D and 3D data sets,
and the topological differences have a high persistence,
i.e., they can be observed even after radical topological simplification.
Besides these topological differences,
the geometric embedding of the saddle points changes as well.

\changed{For simplicity, we limit our discussions 
to simplicial and cubical complexes,
but they can be extended to the more general cell complex as well.}

\subsection{Identical Parts}
\label{section-InconsistenciesTheories}

First, we identify which parts of the Morse-Smale complex
will \emph{not} differ between different methods.
Specifically, we show
that any gradient vector field computed based on the lower star has a constant number of critical cells for 
each dimension and the positions of certain types of critical cells are fixed, independent of the order in which cells are chosen to proceed.

To simplify the discussion, we take advantage of a useful concept introduced by Robins et al. \cite{robins10}.
The \emph{reduced lower star} of 
a vertex $\vv$, denoted by $R(\vv)$, is the intersection of the lower star of $\vv$ with the sphere centered at $\vv$ with a small radius $r$, 
that is $R(\vv) = \text{St}^{-}(\vv) \cap S(\vv, r)$. The lower star can be thought of as projecting the cells in $\text{St}^{-}(\vv) \backslash \vv$ 
onto the sphere centered at $\vv$. With this map, for $d> 0$, a $d$-cell in $\text{St}^{-}(\vv)\backslash \vv $ is mapped bijectively 
to a $(d-1)$-cell in $R(\vv)$. This map also transforms the algorithm by Robins et al. \cite{robins10} to an analogous algorithm processing on the 
reduced lower stars, performing homotopy expansion starting from the steepest edge.

From the definition of the reduced lower star, we can see that it is closed, i.e. every face of a simplex $\sigma$ of $R(\vv)$
is also in $R(\vv)$. Thus, the number of $d$-cells of $R(\vv)$ is constant. It was shown by Robins et al. \cite{robins10} that every critical 
$(d-1)$-cells processed from $R(\vv)$ is a critical $d$-cell in the original complex, except for the initial $0$-cell of $R(\vv)$. 
It follows that the number of critical $d$-cells produced by the lower star-based methods is constant for $d>0$, independent of the order 
in the expansion step. Furthermore, for any vertices of a complex $K$, the number of elements in its lower star is constant,
and a vertex $\vv$ is marked as a minimum if $\text{St}^{-}(\vv) = \varnothing$. This implies that the number and the position of the 
minima are fixed.

Moving to higher dimensional cells, we break the results into smaller pieces. 
\changed{In the following,
we say that a simplex is \emph{covered}
if it belongs to a discrete vector or is marked 
as critical during the homotopy expansion process,
and \emph{uncovered} otherwise.}
We stay in the language of reduced lower stars and show that 
using homotopy expansion in a $1$-skeleton of this structure resulting in a same set of covered $0$- and $1$-simplices
through the following lemmas. 
\begin{lemma}
  The homotopy expansion of the $1$-skeleton of each $R(\vv)$ covers all of its $0$-simplices.
  \label{lemma-CoverAllPoints}
\end{lemma}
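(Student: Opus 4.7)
The plan is to interpret the homotopy expansion on the $1$-skeleton of $R(\vv)$ as a graph traversal, and to show that this traversal together with the critical-cell marking step cannot terminate while leaving any $0$-simplex uncovered.

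First I would view the $1$-skeleton as a graph $G$ whose vertices are the $0$-simplices and whose edges are the $1$-simplices of $R(\vv)$. After the initial pairing $\{p,e\}$ of the steepest $1$-simplex $e$ with one of its endpoints $p$, every edge of the form $\{p,q\}$ in $G$ has exactly one uncovered $0$-face, namely $q$, so the homotopy expansion is free to pair it with $q$. By induction on graph distance, the expansion performs a breadth-first traversal of $G$ rooted at $p$, covering every vertex in the connected component of $p$ and pairing each with a spanning-tree edge.

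Next I would handle the case in which $G$ has several connected components. Once the component of $p$ is exhausted, no $1$-simplex has exactly one uncovered $0$-face -- edges within the explored component have zero uncovered faces, while edges within an unexplored component have two -- so the expansion stalls. According to the algorithm, a cell is then marked as critical; under the lowest-value-first ordering (with ties broken by dimension, as in Robins et al.), this cell must be a $0$-simplex in an unexplored component, because an edge's value weakly exceeds that of its endpoints. The newly critical $0$-simplex is covered by definition and seeds a fresh BFS that covers its component. Iterating across all components of $G$ finishes the argument.

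The main obstacle is the justification in the second step that the critical-cell marking rule always selects a $0$-simplex whenever uncovered $0$-simplices remain; this boils down to the ordering convention on the lower star. Once this is granted, the rest is a routine induction on the number of connected components of $G$, and the lemma follows.
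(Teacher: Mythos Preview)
Your argument is correct and considerably more detailed than the paper's. The paper's proof is a three-sentence contradiction: assume some $0$-simplex $\alpha$ remains uncovered; then $\alpha$ is adjacent to a covered $0$-simplex via a $1$-simplex $\beta$, and the expansion would have paired $\{\alpha,\beta\}$. This tacitly assumes the $1$-skeleton of $R(\vv)$ is connected (or that the critical-marking step has already seeded every component), a point you address explicitly through your connected-component analysis.

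One minor refinement to your second step: the claim that the next critical cell must be a $0$-simplex, justified by ``an edge's value weakly exceeds that of its endpoints,'' only rules out $1$-simplices having an \emph{uncovered} endpoint. Cycle edges inside the already-explored component have both endpoints covered but are themselves still uncovered, and one of them may well have smaller value than every $0$-simplex in the unexplored components. This does not harm the lemma: marking such a cycle edge critical enables no new pairings in the $1$-skeleton, so the algorithm immediately returns to the critical-marking step; after finitely many such markings the minimum over the remaining uncovered cells must be a $0$-simplex (any remaining uncovered $1$-simplex then has an uncovered endpoint dominating it in the order), and your BFS argument resumes. Alternatively, one can sidestep the ordering discussion entirely by noting that the overall process terminates only when every cell of the $1$-skeleton is covered.
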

\begin{proof}
  Assuming that there exists one uncovered $0$-simplex $\alpha$ during the expansion process. Then this uncovered simplex connects
  to a covered $0$-simplex via a $1$-simplex $\beta$. The homotopy expansion will pair $\{\alpha, \beta \}$ during the process.
\end{proof}

\begin{lemma}
  Given $R(\vv)$, the set of all $1$-simplices covered by the homotopy expansion in $R(\vv)$ is the same, independent of the starting 
  point of the process.
  \label{lemma-CoverSameEdges}
\end{lemma}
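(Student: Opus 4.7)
The plan is to leverage Lemma \ref{lemma-CoverAllPoints} together with the exhaustiveness of the homotopy expansion to show that every 1-simplex in the 1-skeleton of $R(\vv)$ is covered when the process terminates. Once that is established, the set of covered 1-simplices coincides with the entire collection of 1-simplices of the 1-skeleton of $R(\vv)$, which is a purely intrinsic quantity and therefore independent of the starting point.

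First I would confirm that the homotopy expansion always halts with no uncovered cell remaining. Since $R(\vv)$ is finite and each step either creates a free pair, covering two previously uncovered cells, or marks a single uncovered cell as critical, covering one cell, the number of uncovered cells strictly decreases at every iteration. Moreover the process cannot stall: whenever no free pair is available, the algorithm is permitted to mark an uncovered cell as critical and resume. Termination therefore coincides with the absence of uncovered cells.

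Next, let $\beta$ be an arbitrary 1-simplex in the 1-skeleton. Lemma \ref{lemma-CoverAllPoints} guarantees that both of its 0-faces are covered at termination. Combined with the exhaustiveness argument above, this forces $\beta$ itself to be covered: either it was paired with one of its 0-faces while that face was still uncovered, or else it was marked as critical at some step when no free pair involving $\beta$ remained available. In either case $\beta$ belongs to the covered set, so the covered 1-simplices are exactly all 1-simplices of the 1-skeleton of $R(\vv)$, which depends only on $R(\vv)$ itself.

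The main obstacle is essentially interpretive rather than technical. One must carefully adopt the paper's definition of \emph{covered} from \Cref{section-InconsistenciesTheories} (paired \emph{or} marked as critical), after which the result reduces to observing that the homotopy expansion is exhaustive on the 1-skeleton. If one were instead to read ``covered'' as ``paired'', the claim would fail — different starting points generally yield different spanning trees, as can be seen already on a 3-cycle — so the proof really hinges on including critical cells in the covered set and on the guarantee that every cell is eventually processed.
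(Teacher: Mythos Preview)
Your argument correctly proves the lemma as literally stated: with ``covered'' meaning paired \emph{or} marked critical, the homotopy expansion is exhaustive and every $1$-simplex ends up covered, so the covered set is the full $1$-skeleton regardless of starting point. That reading does trivialize the statement.

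However, the paper's own proof establishes something stronger --- that the partition of $1$-simplices into \emph{paired} versus \emph{critical} is independent of the starting point (a $1$-simplex is unpaired precisely when it is maximal in some cycle). This stronger claim is what the paper actually uses downstream: the remark after \Cref{theorem-FixedGradField} invokes ``the arguments for \Cref{lemma-CoverSameEdges}'' to conclude that the critical $1$-simplices of $R(\vv)$ (hence the critical $2$-cells of the lower star) have fixed positions, feeding \Cref{corollary-FixedCritical}. Your trivial reading does not deliver this.

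More importantly, your claimed counterexample to the stronger reading is wrong. You assert that on a $3$-cycle different starting points yield different spanning trees; but the homotopy expansion does not choose pairs arbitrarily --- after the starting $0$-simplex is fixed, it always selects the \emph{smallest} available free pair according to the fixed cell ordering. With that rule the $1$-skeleton expansion is exactly Prim's algorithm, and (with distinct values) it returns the unique minimum spanning forest irrespective of the starting vertex. On your $3$-cycle the maximal edge is the unpaired one for every choice of start. So the paper's proof, despite the terminological mismatch with the stated definition of ``covered,'' carries the substantive content that your argument does not.
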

\begin{proof}
  A $1$-simplex would not be paired during the homotopy expansion if its two $0$-simplices have been paired with other $1$-simplices.
  This behavior can only happen if that $1$-simplex is a part of a cycle. Two intersecting cycles would prevent the pairing of their
  last $1$-simplices, marking them as critical cells. All other $1$-simplices will be paired during this homotopy expansion as it covers
  all of the $0$-simiplices owing to the result of \Cref{lemma-CoverAllPoints}. It implies that a $1$-simplex is unpaired if and only 
  if it is maximal in a cycle. This defines the fixed set of covered $1$-simplex by homotopy expansion of the $1$-skeleton of $R(\vv)$.
\end{proof}

The result of \Cref{lemma-CoverSameEdges} concerns only up to the coverage of the $1$-simplices. 
The following lemma ensures that after performing the homotopy expansion on the $1$-skeleton of a complex, the result gradient 
field of that complex is fixed. 
\begin{lemma}
  For each $R(\vv)$, the homotopy expansion of its $1$-skeleton can be done before expanding into $2$-simplices without changing 
  the result gradient field.
  \label{lemma-FixedGradField}
\end{lemma}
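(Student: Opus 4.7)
The plan is to establish this lemma via an exchange argument showing that any pairing of a 2-simplex with one of its 1-faces can be safely deferred until after the entire 1-skeleton has been processed, without altering which pairs are created. Combined with \Cref{lemma-CoverSameEdges}, this yields the same gradient field whether we expand the 1-skeleton first or interleave with 2-simplex processing.

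First, I would observe that a 0-1 pair $\{\alpha, \beta\}$ is eligible exactly when $\beta$ has a unique uncovered 0-face $\alpha$, a condition depending solely on the coverage state of 0-simplices. Processing of 2-simplices never affects this condition. Therefore, if we prioritize all eligible 0-1 pairings, we will not miss any 0-1 pairing that would have been made in the interleaved order, and by \Cref{lemma-CoverSameEdges} the resulting set of 0-1 pairs is exactly the fixed one produced by pure 1-skeleton expansion.

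Second, I would verify that the 1-2 pairings are consistent between the two orderings. When a 2-simplex $\gamma$ becomes eligible for pairing with a 1-face $\beta$, its two other 1-faces must already be covered. By examining the triangle combinatorics of $\gamma$ together with the eligibility condition that forced those prior 0-1 pairings (each such pairing required the \emph{other} 0-face to already be covered), I would argue that both 0-faces of $\beta$ are also covered at this moment. Consequently, $\beta$ is never simultaneously eligible for a 0-1 pairing when $\gamma$ becomes eligible, so prioritizing 0-1 pairings cannot preempt any 1-2 pairing that would otherwise occur in the interleaved order.

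The main obstacle is the detailed case analysis in the second step: one must handle the various configurations of how $\gamma$'s other two 1-faces became covered, including the subtlety of critical 1-simplices arising from the ``maximal-in-cycle'' situation identified in the proof of \Cref{lemma-CoverSameEdges}. Once this coverage-propagation argument is made rigorous, no ambiguity remains in matching up pairs between the two orderings, and the final gradient fields coincide.
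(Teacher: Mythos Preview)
Your exchange argument is sound in outline, but it is substantially more elaborate than what the paper actually does. The paper argues from the opposite direction: it \emph{assumes} the $1$-skeleton expansion has already been completed and then looks at the first $1$-$2$ vector $\{\alpha^{(1)},\beta^{(2)}\}$. Since $\alpha^{(1)}$ is still unpaired at that point, no $0$-$1$ vector $\{\gamma^{(0)},\alpha^{(1)}\}$ could have been available---otherwise the $1$-skeleton expansion would not have terminated. That single observation is the whole proof in the paper; it never runs your direction (pushing an interleaved $1$-$2$ pairing past later $0$-$1$ pairings) and never touches the triangle combinatorics you describe in step~2.

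Your step~2 claim---that when a $1$-$2$ pair $\{\beta,\gamma\}$ becomes eligible in the interleaved order both endpoints of $\beta$ are already covered---is correct and can be made rigorous by a short induction on the number of $1$-$2$ pairs already performed (edges covered by $0$-$1$ pairs or by critical marking automatically have both endpoints covered; edges covered by earlier $1$-$2$ pairs fall to the inductive hypothesis). So your route does close. The paper simply bypasses all of this: working from the ``$1$-skeleton first'' side, the fact that no $0$-$1$ pairing remains eligible is immediate from the expansion being finished (equivalently from \Cref{lemma-CoverAllPoints}), and the two phases decouple without any case analysis. What your approach buys is a more explicit two-sided equivalence; what the paper's approach buys is brevity.
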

\begin{proof}
  Supposing that we completed expanding in the $1$-skeleton of $R(\vv)$. Assigning the first vector $\{\alpha^{(1)}, \beta^{(2)}\}$ 
  requires that $\alpha^{(1)}$ is the only face of $\beta^{(2)}$ that is unpaired during the expansion of the $1$-skeleton. 
  Thus, no vector $\{\gamma^{(0)}, \alpha^{(1)}\}$ is chosen during the first expansion since this would imply the expansion 
  had not been finished. 
\end{proof}

From \Cref{lemma-FixedGradField}, mapping to the original lower star, we get the following result.
\begin{theorem}
  For any vertex $\vv$ of a complex $K$, the gradient vector fields between the $d$- and $(d+1)$-cells in $\text{St}^{-}(\vv)$
  are fixed for $d \geq 2$, independent of the order of processing the cells.
  \label{theorem-FixedGradField}
\end{theorem}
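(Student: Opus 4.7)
The plan is to lift Lemma~\ref{lemma-FixedGradField} (which concerns the $1$-skeleton of $R(\vv)$) to the full reduced lower star by induction on skeleton dimension, and then transport the conclusion back to $\text{St}^{-}(\vv)$ via the bijection between $d$-cells in $\text{St}^{-}(\vv) \setminus \vv$ and $(d-1)$-cells in $R(\vv)$. Since the theorem asserts fixed pairings only for $d \geq 2$ in the original complex, in $R(\vv)$ I need to show that the gradient field between $k$-cells and $(k+1)$-cells is independent of processing order for every $k \geq 1$.

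First, I would combine Lemmas~\ref{lemma-CoverAllPoints}, \ref{lemma-CoverSameEdges}, and \ref{lemma-FixedGradField} to pin down the state of the $1$-skeleton: all $0$-simplices are covered, the unpaired $1$-simplices are precisely the maximal edges of cycles, and the expansion may be completed on the $1$-skeleton before any $2$-simplex is touched. This gives the base case, $k=1$ in $R(\vv)$ (equivalently $d=2$ in $\text{St}^{-}(\vv)$): once the $1$-skeleton is fully processed, the first admissible pair $\{\alpha^{(1)}, \beta^{(2)}\}$ must satisfy that $\alpha$ is the unique uncovered face of $\beta$, a condition determined entirely by the fixed post-$1$-skeleton configuration and the face incidences of the complex, and therefore independent of ordering.

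The inductive step is the main technical point. Assuming that after processing the $k$-skeleton the configuration of paired and critical cells is uniquely determined, I would generalize the argument of Lemma~\ref{lemma-FixedGradField} to show that the expansion can be completed on the $(k+1)$-skeleton before any $(k+2)$-cell is introduced, without altering the resulting pairs. The key structural fact is the same one exploited in Lemma~\ref{lemma-FixedGradField}: a pair $\{\alpha^{(k+1)}, \beta^{(k+2)}\}$ can only be formed when $\alpha$ is the unique uncovered $(k+1)$-face of $\beta$, so any attempt to pair a $(k+2)$-cell prematurely stalls until the $(k+1)$-skeleton expansion would have reached the same state. The hard part will be formalizing this "stalling" argument cleanly in arbitrary dimension: I need to track how simple homotopy expansion in the $(k+1)$-skeleton commutes with interleaved pairings of higher-dimensional cells, and to verify that no pair, and no critical cell, can shift under reordering.

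Once fixity of the pairings between $k$- and $(k+1)$-cells in $R(\vv)$ is established for all $k \geq 1$, the theorem follows by applying the bijective projection from $R(\vv)$ back to $\text{St}^{-}(\vv) \setminus \vv$, which raises dimensions by one and yields the claim for $d \geq 2$. I would note finally that the case $d=1$ is deliberately excluded: the steepest-edge selection in Robins' method operates precisely at that level, which is exactly where the probabilistic methods of Gyulassy et al.\ and Reininghaus et al.\ deviate, consistent with the topological discrepancies documented in \Cref{section-TopologicalInconsistencies}.
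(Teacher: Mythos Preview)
Your approach is essentially the same as the paper's, though considerably more detailed. The paper's entire proof consists of the single sentence ``From \Cref{lemma-FixedGradField}, mapping to the original lower star, we get the following result''; it treats the theorem as an immediate corollary of Lemma~\ref{lemma-FixedGradField} together with the dimension-shifting bijection $R(\vv) \leftrightarrow \text{St}^{-}(\vv)\setminus\vv$, without spelling out an induction on skeleton dimension. Your plan to make that induction explicit is a reasonable and more rigorous elaboration of the same idea; the inductive step you flag as ``the hard part'' is precisely what the paper leaves implicit.
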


In the arguments for \Cref{lemma-CoverSameEdges}, we can see that the critical $1$-simplices which are unpaired during the 
homotopy expansion of the $1$-skeleton will also not be paired during the expansion of $2$-skeleton of $R(\vv)$, and their positions 
will also be fixed. From \Cref{theorem-FixedGradField}, we can conclude the same for higher dimensional cells. The following results
summarize our observations.
\begin{corollary}
  For $d \geq 2$, the position of critical $d$-cells in $\text{St}^{-}(\vv)$, for every vertex $\vv$, are independent of the chosen order
  in the computing process of the gradient vector field.
  \label{corollary-FixedCritical}
\end{corollary}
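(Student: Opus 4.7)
The plan is to derive the corollary directly from \Cref{theorem-FixedGradField} together with the preceding lemmas, by observing that criticality is a purely negative condition. A $d$-cell $\sigma$ is critical exactly when it participates in no discrete vector, i.e., $\sigma$ is neither the lower cell of a $\{d,d+1\}$-pair nor the upper cell of a $\{d-1,d\}$-pair. Hence, to prove that the set of critical $d$-cells in $\text{St}^{-}(\vv)$ is invariant under the processing order, it suffices to prove that both the $\{d-1,d\}$-pairings and the $\{d,d+1\}$-pairings in $\text{St}^{-}(\vv)$ are order-invariant.

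For $d \geq 3$, both incident pair sets fall in the range covered by \Cref{theorem-FixedGradField}, since $d-1 \geq 2$ and $d \geq 2$. Both adjacent pairings are therefore fixed and, consequently, so is the set of $d$-cells that remain outside every pair, which is exactly the set of critical $d$-cells.

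The boundary case $d = 2$ requires slightly more care because \Cref{theorem-FixedGradField} only covers one side. The $\{2,3\}$-pairings are still fixed by that theorem. The $\{1,2\}$-pairings correspond, under the reduced-lower-star bijection, to pairings of $0$- and $1$-simplices in $R(\vv)$, that is, to the homotopy expansion of the $1$-skeleton of $R(\vv)$. \Cref{lemma-FixedGradField} allows this expansion to be completed before any $2$-simplex of $R(\vv)$ is processed without altering the final gradient field, and \Cref{lemma-CoverAllPoints} together with \Cref{lemma-CoverSameEdges} guarantee that the resulting set of $\{0,1\}$-pairs in $R(\vv)$ is independent of the order of processing (all $0$-simplices are paired, and the paired $1$-simplices are exactly those that are not maximal in a cycle). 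Pulling back along the bijection, the $\{1,2\}$-pairings in $\text{St}^{-}(\vv)$ are also order-invariant, so the same is true for the critical $2$-cells.

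Combining the two cases, whether a given $d$-cell with $d \geq 2$ ultimately belongs to no discrete vector is an order-invariant property, which is precisely the statement of the corollary. The only subtlety is the dimension shift induced by the reduced-lower-star map when handling the base case $d=2$; once this correspondence is tracked carefully, the argument is an immediate consequence of the preceding results and introduces no new combinatorial ideas.
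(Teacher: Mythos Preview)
Your proposal is correct and follows essentially the same route as the paper: the corollary is deduced from \Cref{theorem-FixedGradField} for the higher-dimensional pairings and from \Cref{lemma-CoverSameEdges} (via the reduced-lower-star bijection) for the base case $d=2$. One minor phrasing point: \Cref{lemma-CoverSameEdges} only fixes the \emph{set of $1$-simplices} that get paired in $R(\vv)$, not the actual matching of $0$- and $1$-simplices; your parenthetical makes clear you are using exactly this weaker statement, which is all that is needed to determine which $2$-cells remain unpaired and hence critical.
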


\begin{corollary}
  For $d \geq 2$, each separatrix connecting a pair of $d$- and $(d+1)$-critical cells retains its connectivity, independent
  of the chosen order in the computing process of the gradient vector field. 
  \label{corollary-FixedSeparatrices}
\end{corollary}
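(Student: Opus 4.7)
The plan is to derive the corollary directly from \Cref{theorem-FixedGradField} and \Cref{corollary-FixedCritical} by observing that, once the $(d,d+1)$-vectors and the critical cells are fixed, the separatrices connecting them are purely combinatorial objects, constructed by a deterministic procedure that makes no further use of the processing order.

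First, I would note that Robins' algorithm produces the global discrete gradient field as a disjoint union, over all vertices $\vv \in K$, of the vectors generated inside each $\text{St}^{-}(\vv)$. Consequently the collection of $(d,d+1)$-vectors of $V$ is the union of the $(d,d+1)$-vectors produced locally in every $\text{St}^{-}(\vv)$. By \Cref{theorem-FixedGradField}, each such local contribution is fixed for $d \geq 2$, so the global set of $(d,d+1)$-vectors is fixed for $d \geq 2$ as well. \Cref{corollary-FixedCritical} then provides that the positions of critical $d$-cells and $(d+1)$-cells are independent of processing order, covering both endpoints of any separatrix in the allowed range.

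Next, I would invoke the definition of a $V$-path to describe the separatrix out of a critical $(d+1)$-cell $c$ as the tree of paths
\[
\alpha_0^{(d)},\, \beta_0^{(d+1)},\, \alpha_1^{(d)},\, \ldots,\, \alpha_{\ell+1}^{(d)}
\]
obtained by branching from $c$ into each of its $d$-faces $\alpha_0$ and then alternately applying (i) at each non-critical $\alpha_i^{(d)}$, the unique pairing $\{\alpha_i^{(d)}, \beta_i^{(d+1)}\} \in V$, and (ii) at each $\beta_i^{(d+1)}$, the face relation $\alpha_{i+1} \preceq \beta_i$ with $\alpha_{i+1} \neq \alpha_i$. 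A path terminates precisely when it reaches a critical $d$-cell. Both rules refer only to the $(d,d+1)$-vector set and to the critical cells, which have just been shown to be fixed. Therefore the tree and, in particular, the multiset of critical $d$-cells reached from $c$ is independent of the order used when computing the gradient field, yielding the claimed invariance of separatrix connectivity.

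The main obstacle is conceptual rather than computational: confirming that this tree-of-$V$-paths description of the separatrix does not smuggle in additional order-dependent choices. Once \Cref{theorem-FixedGradField} and \Cref{corollary-FixedCritical} are in hand, this reduces to an unpacking of the definition of a $V$-path between critical cells, so the substantive work has already been carried out in the preceding results and the corollary follows as an immediate consequence.
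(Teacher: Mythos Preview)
Your proposal is correct and matches the paper's approach: the paper presents this corollary (together with \Cref{corollary-FixedCritical}) as an immediate summary of \Cref{theorem-FixedGradField}, without giving a separate proof. Your explicit unpacking---that a separatrix between a $d$- and $(d{+}1)$-critical cell is a $V$-path built solely from the fixed $(d,d{+}1)$-vectors and terminating at fixed critical cells---is exactly the intended reasoning, just spelled out in more detail than the paper does.
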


We conclude that the positions of the minima and maxima in 2D and 3D data sets
are identical between the different methods.
The same holds for the $2$-saddle points in 3D data sets.
We further conclude
that the connectivity of the separatrices
between 2-saddles and maxima in 3D data sets
is identical between the different methods.

\subsection{Different Parts}
\label{section-InconsistenciesExperiments}

The saddle points ($1$-saddle points in the case of three dimensions)
did not appear in the results of
\Cref{section-InconsistenciesTheories}.
Indeed, with different ways of choosing the first vector in the computation (steepest descent or probabilistically), 
they can have different positions. We demonstrate this behavior by applying the steepest descent
method and the probabilistic method on a scalar field given by a small randomly generated $4\times 4$ matrix
\begin{align}
  A = \begin{pmatrix}
    9 & 8 & 7 & 13\\
    1 & 6 & 10 & 0\\
    12 & 14 & 2 & 4\\
    3 & 11 & 5 & 15
  \end{pmatrix}.
  \label{formula-Matrix1}
\end{align}
The differences in the positions of the saddle points are shown in 
\Cref{fig-Mat1Critical}. We can see from the two figures that the positions of the minima and the maximum are fixed whereas 
\changed{three saddle points} 
changed \changed{their positions}.
This is in agreement with the theoretical results stated earlier. It should 
be noted that while the saddle points can move in different iterations of the probabilistic approach, their movements are restricted to their
respective lower stars.

\begin{figure}[t]%
\begin{subfigure}{0.48\linewidth}%
\includegraphics[width=\textwidth]{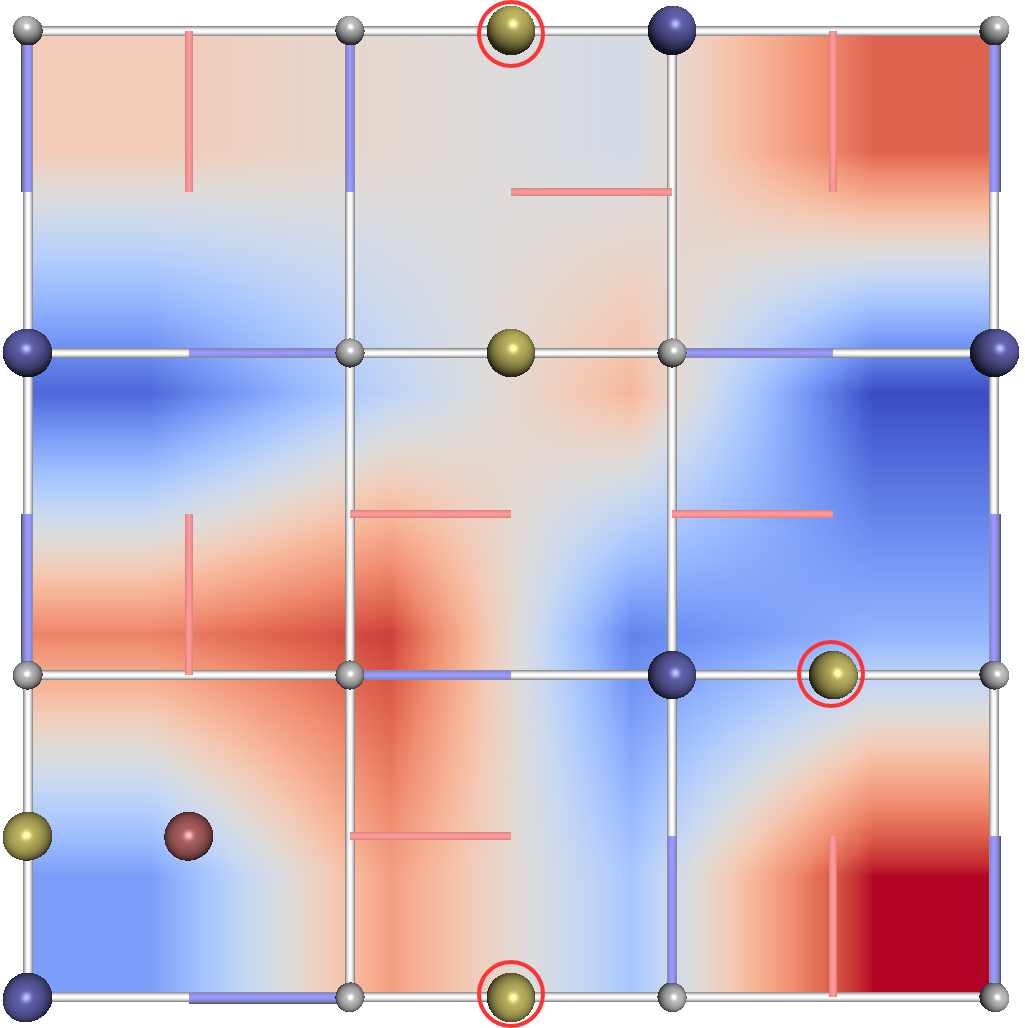}%
\caption{Steepest descent \cite{robins10}}%
\label{fig-Mat1RobinsCritical}%
\end{subfigure}%
\hfill%
\begin{subfigure}{0.48\linewidth}%
\includegraphics[width=\textwidth]{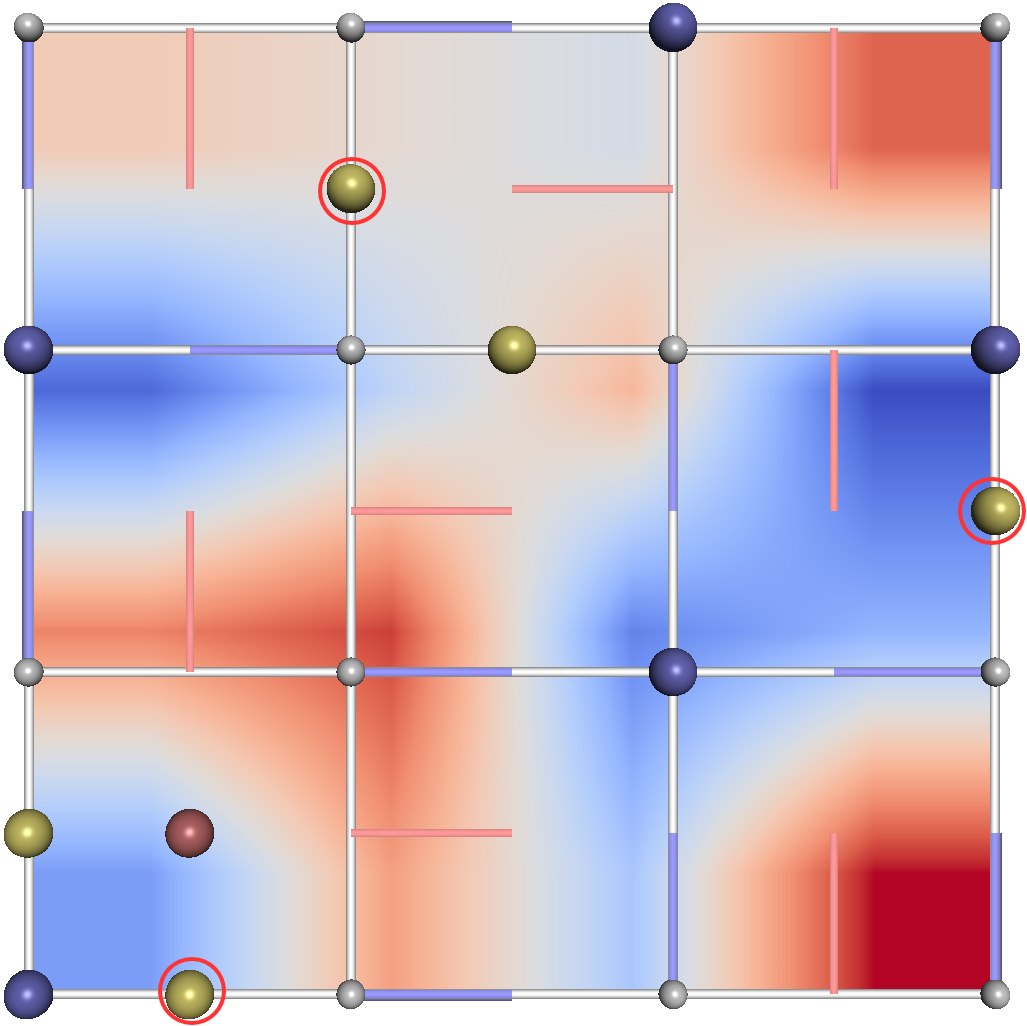}%
\caption{Probabilistically chosen \cite{gyulassy12ms, reininghaus2012combinatorial}}%
\label{fig-Mat1JansCritical}%
\end{subfigure}%
\caption{Given the same input data from \Cref{formula-Matrix1}, 
the two different methods for computing the discrete gradient field 
can result in different positions for saddle points as indicated by the red circles.
\changed{The white spheres, white lines, and squares represent
the $0$-cells, $1$-cells, and $2$-cells.
The blue and red segments illustrate 
the pairings between 
the $0$- and $1$-cells
as well as the
$1$- and $2$-cells, respectively.}}%
\label{fig-Mat1Critical}%
\end{figure}

\begin{figure}%
\begin{subfigure}{0.48\linewidth}%
\includegraphics[width=\textwidth]{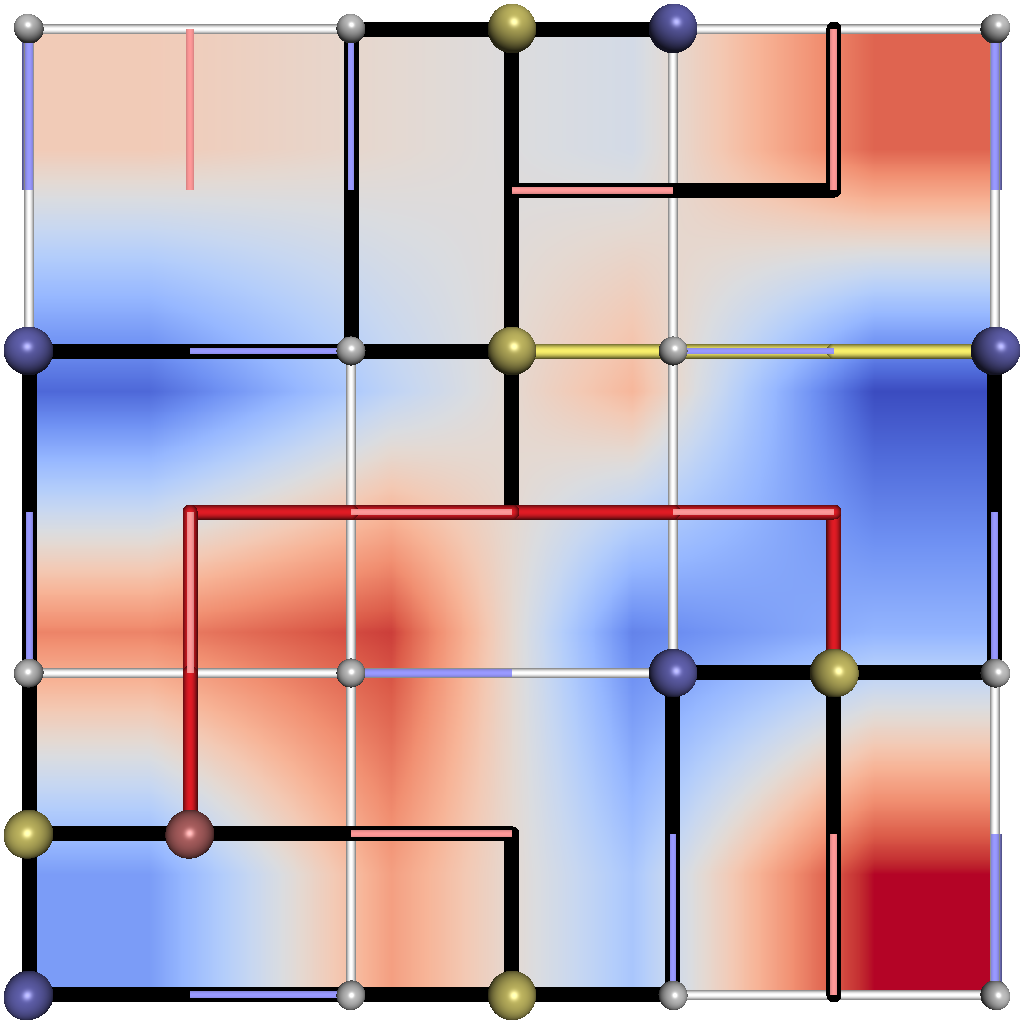}%
\caption{Steepest descent \cite{robins10}}%
\label{fig-Mat1RobinsSeparatrices}%
\end{subfigure}%
\hfill%
\begin{subfigure}{0.48\linewidth}%
\includegraphics[width=\textwidth]{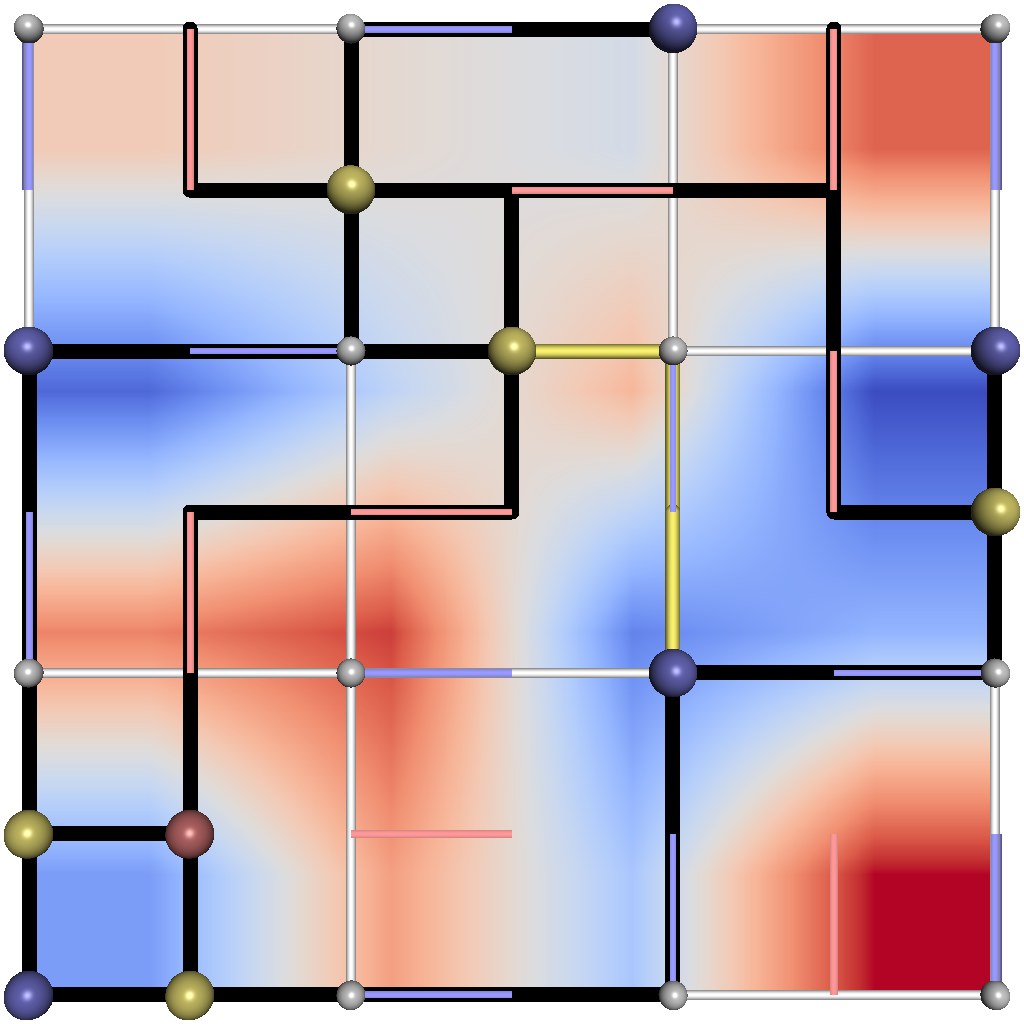}%
\caption{Probabilistically chosen \cite{gyulassy12ms, reininghaus2012combinatorial}}%
\label{fig-Mat1JansSeparatrices}%
\end{subfigure}%
\caption{Given the same input data from \Cref{formula-Matrix1}, 
the two different methods for computing the discrete gradient field 
can result in a different connectivity of the separatrices. 
The separatrices connecting to the same critical points are colored black. 
The separatrices connecting to different minima are highlighted in yellow. 
The red separatrix only presents for the steepest descent method, and is missing from the output of the probabilistic method.}%
\label{fig-Mat1Separatrices}%
\end{figure}

The changes in the position of the saddle points
further cause separatrices
to connect to different critical points
as illustrated in \Cref{fig-Mat1Separatrices}.
We can easily see
the top-most saddle
connects to a different minimum,
and that another saddle looses its connection to a maximum.
This observation
can be explained
by the difference
in the gradient vector fields.
Notably,
the result of \Cref{corollary-FixedSeparatrices}
only holds for separatrices in higher layers,
whereas this example shows separatrices
between 0-cells, 1-cells, and 2-cells,
for which \Cref{corollary-FixedSeparatrices}
does not hold.

As the changes already happen in two dimensions, it is straightforward that these changes can also appear in three dimensions. Indeed, 
consider the scalar field given by the small randomly generated tensor
\begin{align}
  B = \begin{pmatrix}
    \begin{pmatrix}
      14 & 18 & 9\\
      6 & 24 & 15\\
      23 & 4 & 7
    \end{pmatrix},
    \begin{pmatrix}
      10 & 20 & 0\\
      21 & 1 & 11\\
      2 & 26 & 25
    \end{pmatrix},
    \begin{pmatrix}
      12 & 13 & 8\\
      19 & 22 & 3\\
      17 & 5 & 16
    \end{pmatrix}
  \end{pmatrix}.
  \label{formula-Tensor 1}
\end{align}
The inconsistencies of the positions of the $1$-saddles (green) together with the connectivity of the separatrices 
are illustrated in \Cref{fig-Tensor1Critical}.

\begin{figure}
  \begin{subfigure}{0.48\linewidth}
      \includegraphics[width=\textwidth]{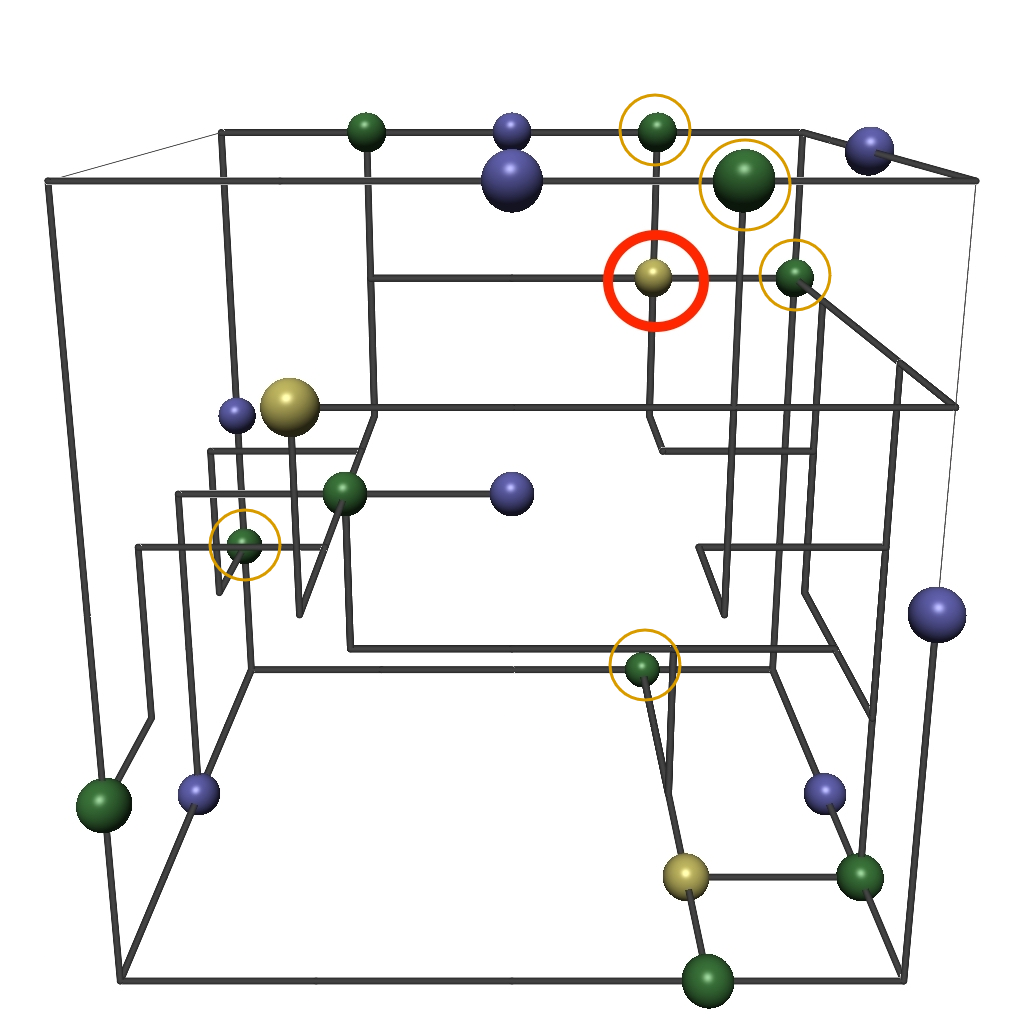}
      \caption{Steepest descent \cite{robins10}}
      \label{fig-Tensor1RobinsSeparatrices}
  \end{subfigure}
  \hfill
  \begin{subfigure}{0.48\linewidth}
      \includegraphics[width=\textwidth]{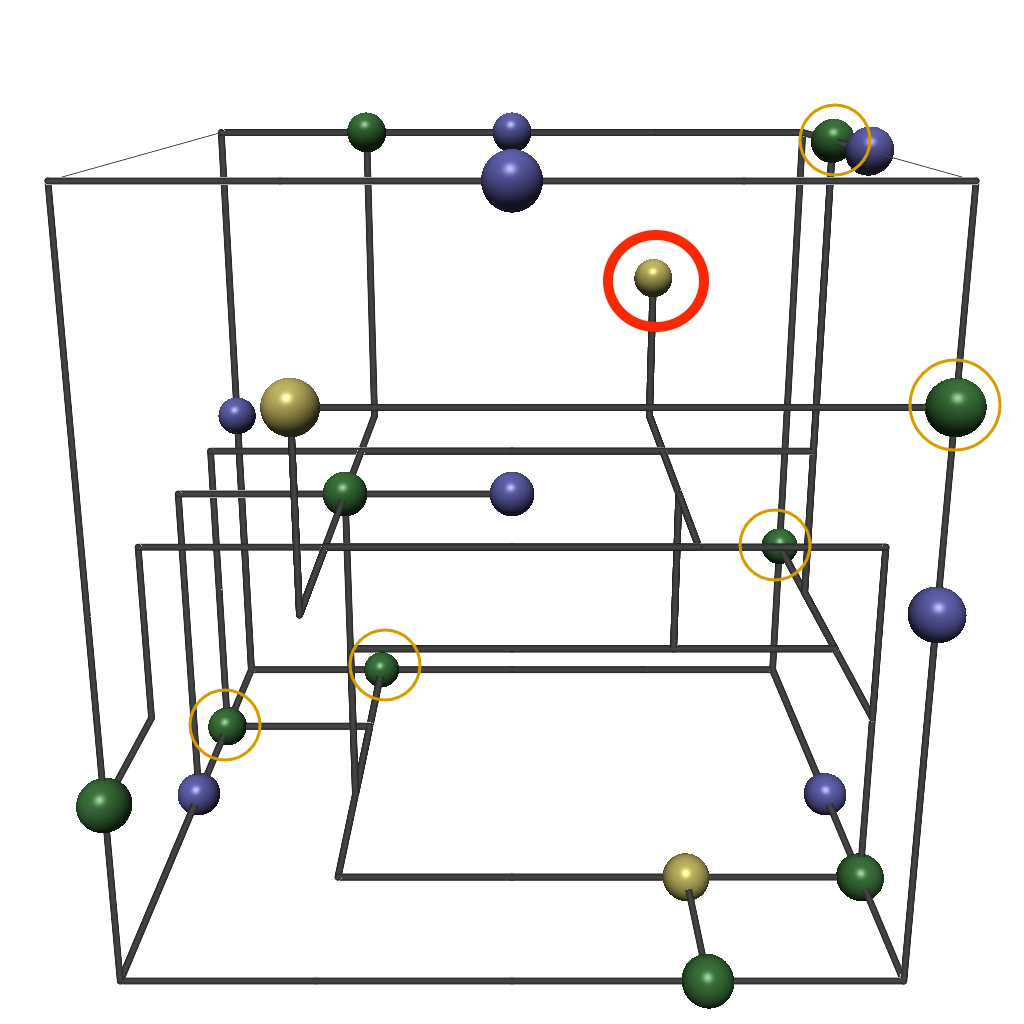}
      \caption{Probabilistically chosen \cite{gyulassy12ms, reininghaus2012combinatorial}}
      \label{fig-Tensor1JansSeparatrices}
  \end{subfigure}
  \caption{Morse-Smale complexes of the simple 3D data set from \Cref{formula-Tensor 1} computed using different methods. Many $1$-saddle points (green) have different positions (marked by the orange circles). This also leads to a different connectivity of the separatrices (marked by the red circle).}
  \label{fig-Tensor1Critical}
\end{figure}

Next, we show that 
these issues also appear in larger and smooth data sets.
We constructed a data set of
size $40 \times 40$
from \Cref{formula-Matrix1}
using bilinear interpolation.
We can see in \Cref{fig-Mat1Interpolate}
that the connectivity between
critical points is different
as highlighted in the figure. 
This shows that topological changes appear in smooth data sets as well.

These differences do not only happen in a few selected scalar fields.
We randomly generated $1000$ 2D data sets of size $4 \times 4$, $8 \times 8$, and $16 \times 16$.
Then we counted for how many of those data sets
we can observe a positional change of at least one saddle,
and a connectivity change of at least one separatrix.
The results are summarized in \Cref{table-SaddlePosition2d}.
We can see that 
these changes appear in
almost all data sets of size $8 \times 8$ and larger,
and in many smaller data sets as well,
\changed{even after interpolating the $4 \times 4$ data sets to higher resolutions.}
Automatically detecting a connectivity change of a separatrix
can be done as follows:
for each separatrix in a Morse-Smale complex,
identify the two lower stars where it originates/ends.
Then try to find a separatrix in the other Morse-Smale complex
which originates/ends in the same pair of lower stars.
If it cannot be found, we record a connectivity change.
This procedure is rooted in the fact that the positional change of a saddle point is restricted to the respective lower star and hence its separatrices need to originate/end there.

\begin{table}%
\centering%
\resizebox{\linewidth}{!}{%
\begin{tabular}{lcc}
\toprule
\textbf{Data set} & \makecell{\textbf{\# cases with}\\\textbf{positional change}} 
& \makecell{\textbf{\# cases with}\\\textbf{connectivity change}}\\
\midrule
Random $4 \times 4$ & $603$ & $235$\\
Random $8 \times 8$ & $996$ & $942$\\
Random $16 \times 16$ & $1000$ & $1000$\\
Interpolated $8 \times 8$ & $567$ & $533$\\
Interpolated $16 \times 16$& $581$ & $555$\\
\bottomrule
\end{tabular}}%
\caption{Number of cases with differences in the positions of the saddle points 
and in the connectivity of the separatrices between the two methods when applied to $1000$ randomly generated 2D data sets.}
\label{table-SaddlePosition2d}
\end{table}

\begin{figure}[!t]%
\begin{minipage}{\linewidth}%
\begin{subfigure}[t]{0.48\linewidth}%
\includegraphics[width=\textwidth]{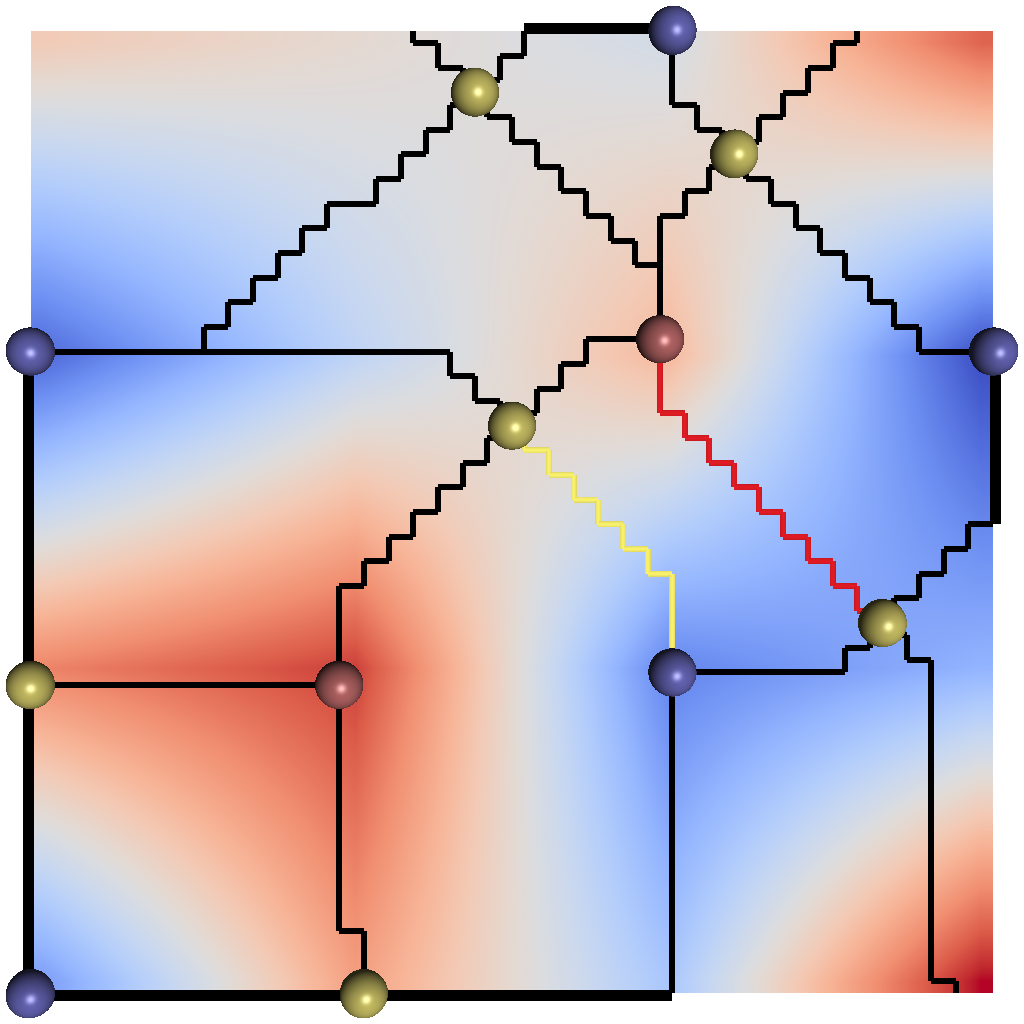}%
\caption{Steepest descent \cite{robins10}}%
\label{fig-Mat1InterpolateRobins}%
\end{subfigure}%
\hfill%
\begin{subfigure}[t]{0.48\linewidth}%
\includegraphics[width=\textwidth]{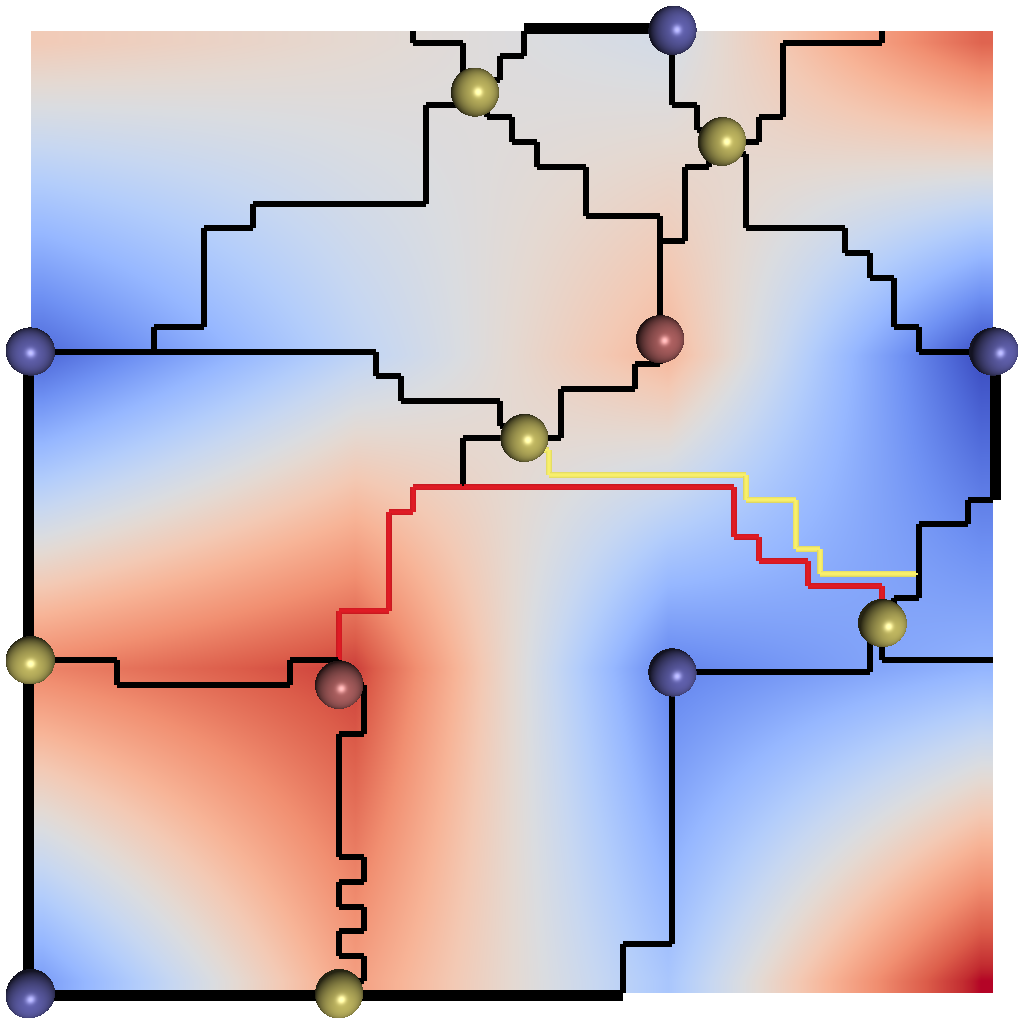}%
\caption{Probabilistically chosen \cite{gyulassy12ms, reininghaus2012combinatorial}}%
\label{fig-Mat1InterpolateJans}%
\end{subfigure}%
\caption{The data set from \Cref{formula-Matrix1} has been bilinearly interpolated on a $40\times 40$ uniform grid. The Morse-Smale complexes have been extracted using the different methods and we observe connectivity changes for some separatrices (highlighted in red and yellow).}
\label{fig-Mat1Interpolate}
\end{minipage}%
\\[2\baselineskip]
\begin{minipage}{\linewidth}%
\begin{subfigure}[t]{0.48\linewidth}%
\includegraphics[width=\textwidth]{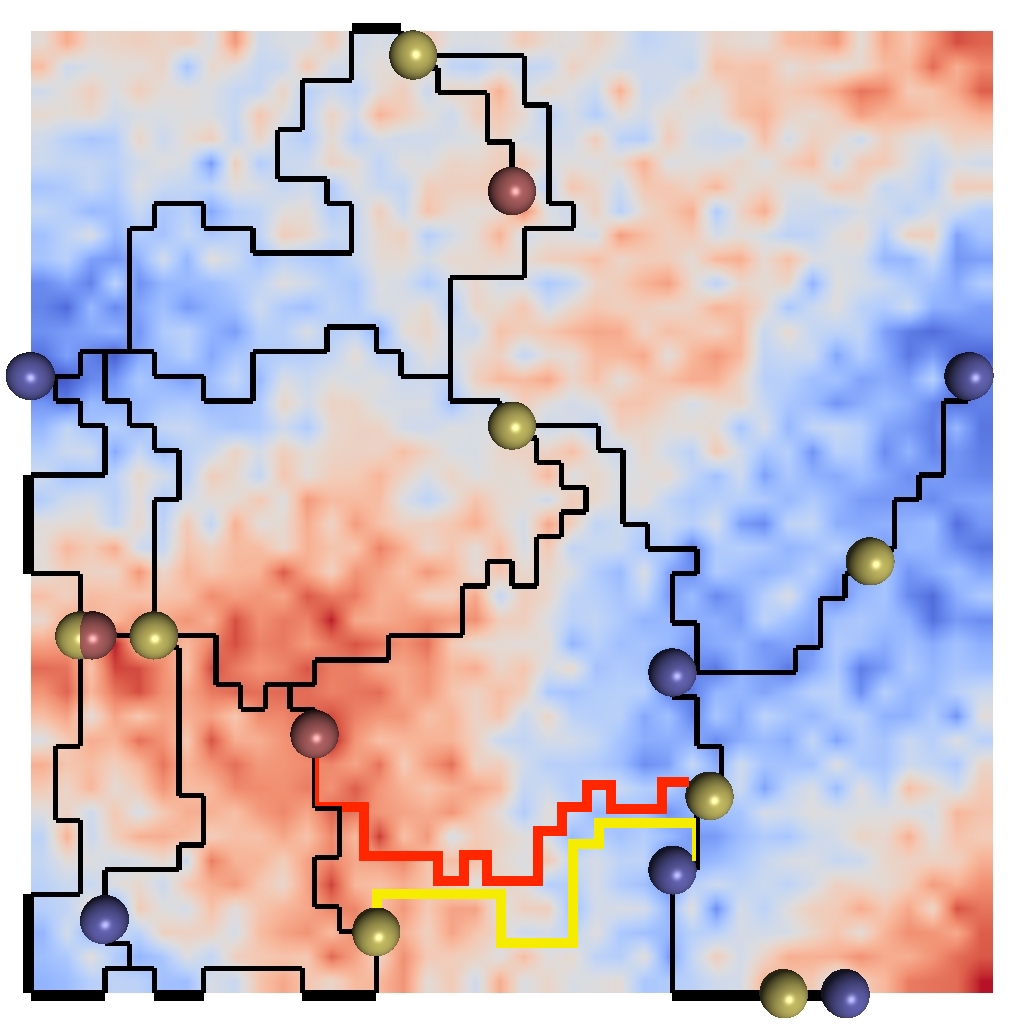}
\caption{Steepest descent \cite{robins10}}%
\label{fig-Mat1NoiseSimplified}%
\end{subfigure}%
\hfill%
\begin{subfigure}[t]{0.48\linewidth}%
\includegraphics[width=\textwidth]{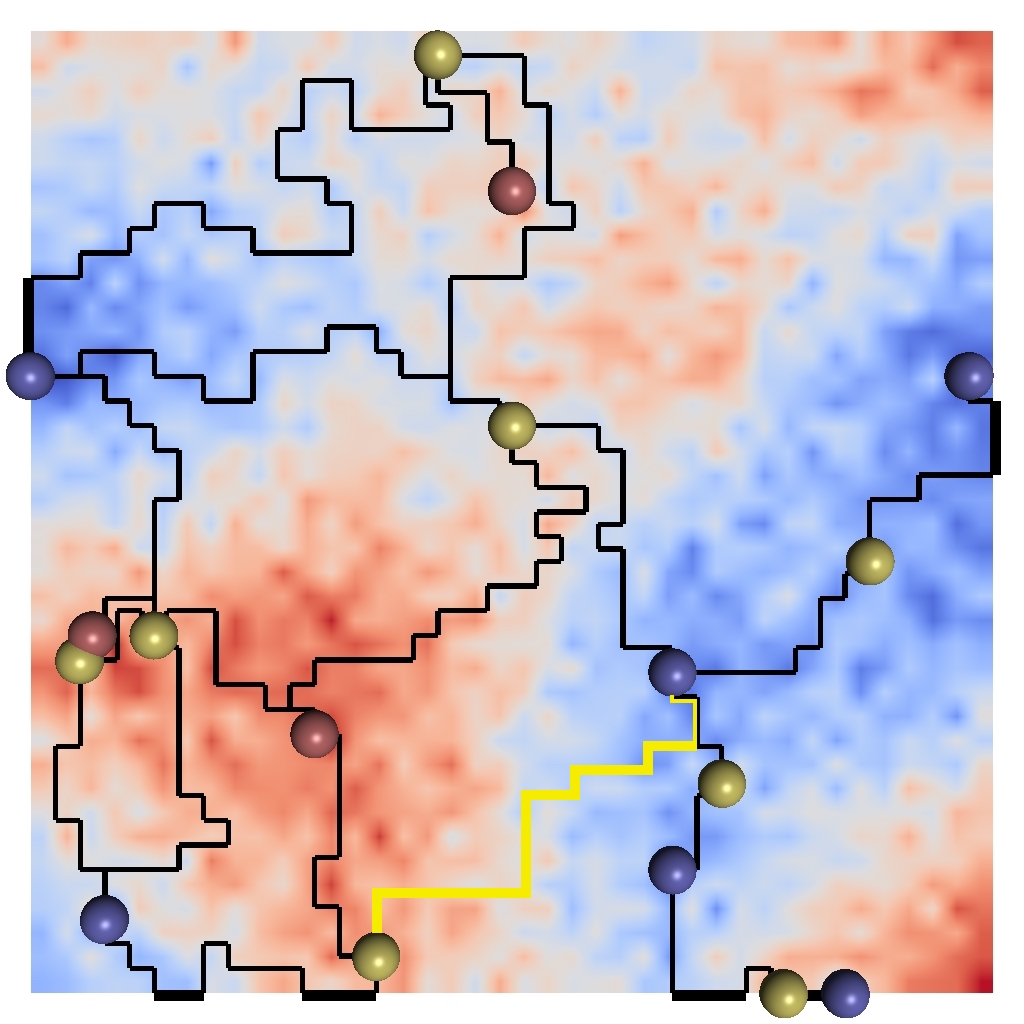}%
\caption{Probabilistically chosen \cite{gyulassy12ms, reininghaus2012combinatorial}}%
\label{fig-Mat1NoiseJanSimplified}%
\end{subfigure}%
\caption{Noise has been added to the data set from \Cref{fig-Mat1Interpolate}. The Morse-Smale complexes produced by the two methods show differences in separatrix connectivity (red and yellow) even at a very high level of topological simplification corresponding to $45\%$ of the data range.}%
\label{fig-Mat1Noise}%
\end{minipage}%
\\[2\baselineskip]
\begin{minipage}{\linewidth}%
\begin{subfigure}[t]{0.48\linewidth}%
\includegraphics[width=\textwidth]{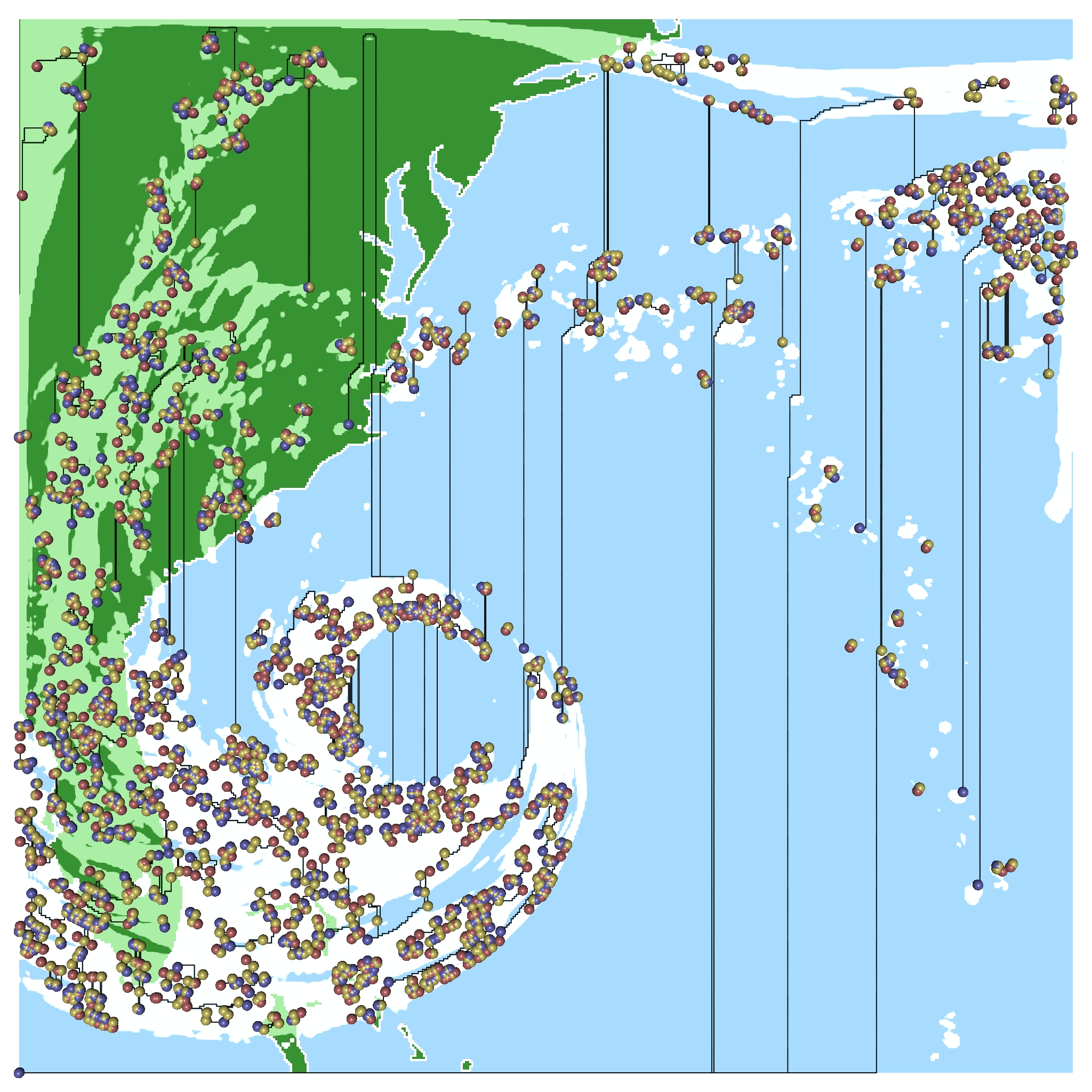}%
\caption{Steepest descent \cite{robins10}}%
\label{fig-CloudRobins}%
\end{subfigure}%
\hfill%
\begin{subfigure}[t]{0.48\linewidth}%
\includegraphics[width=\textwidth]{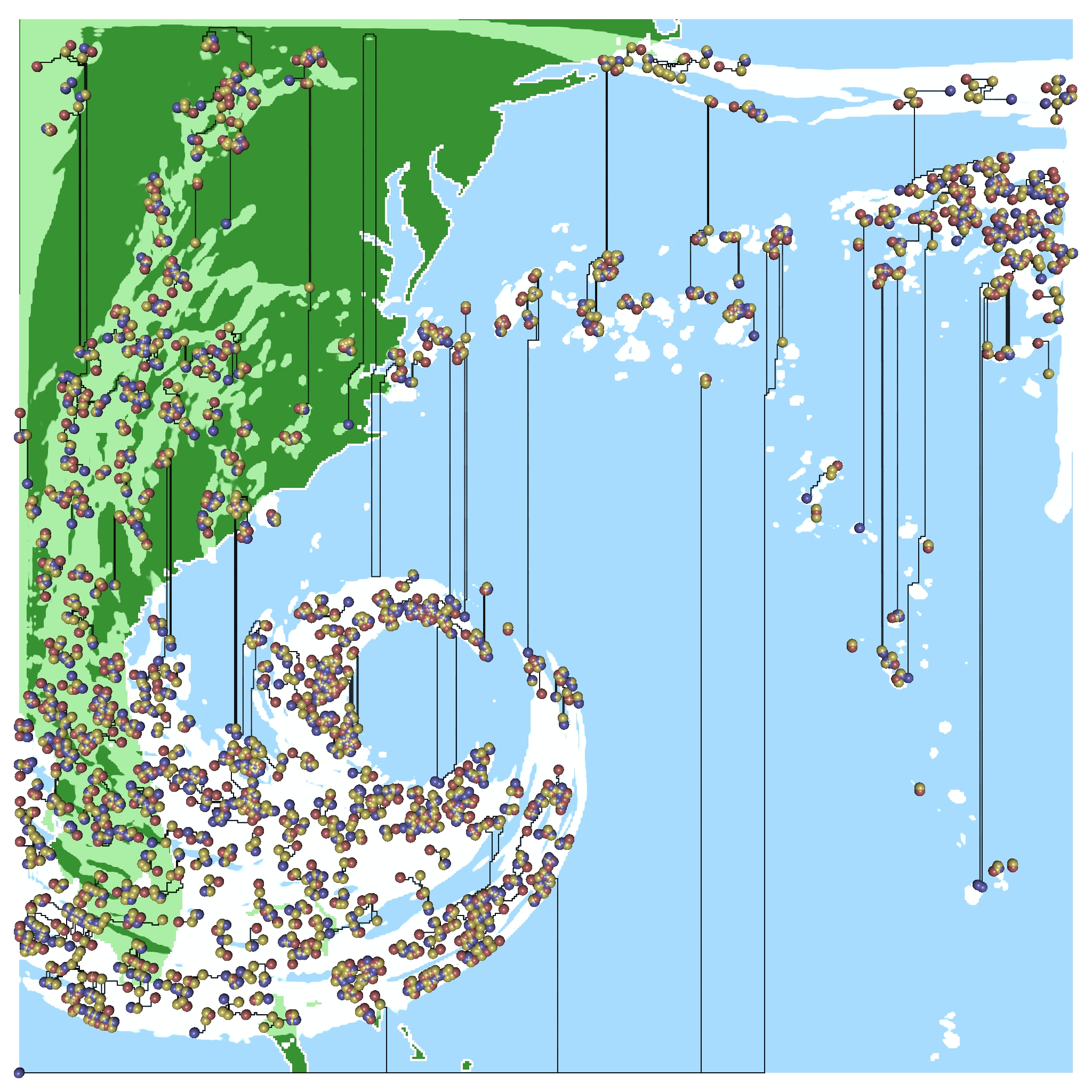}%
\caption{Probabilistically chosen \cite{gyulassy12ms, reininghaus2012combinatorial}}%
\label{fig-CloudJans}%
\end{subfigure}%
\caption{The Morse-Smale complex of the Hurricane Isabel data set has been extracted using the two different methods. Only the separatrices with differences in connectivity are shown together with their endpoints. This is about $10\%$ of all separatrices.}
\label{fig-CloudIsabel}
\end{minipage}%
\end{figure}

We repeat this experiment on a noisy data set,
which we obtained by adding noise to the previous data set.
We find $909$ critical points in this data set.
We apply a topological simplification with a persistence threshold of $45\%$ of the data range.
The simplified Morse-Smale complexes then have $17$ critical points.
We can see in \Cref{fig-Mat1Noise}
that connectivity changes still persist under these conditions -- even at this very high level of persistence.

To show that these differences also exist in real-life data sets,
we performed our experiments on the Hurricane Isabel data set.\footnote{This data set can be downloaded at \url{https://www.earthsystemgrid.org/dataset/isabeldata.html}}
The data set 
represents different atmospheric variables
over different time steps.
We chose the $41$-st time step 
and the \emph{total cloud} variable.
Further,
we extracted the $50$-th slice in $z$-direction
and use it as a 2D data set with a $500 \times 500$ uniform grid.
The differences in the results of the two methods can be seen in \Cref{fig-CloudIsabel}. 
A total of $2087$ out of $20159$ separatrices with different connectivity were found.
Additionally, $10135$ critical points were detected, among them, $983$ out of $5067$ saddle points have different positions.

\paragraph{Summary}
We have shown that the probabilistic methods of
Gyulassy et al. \cite{gyulassy12ms}
and
Reininghaus et al. \cite{reininghaus2012combinatorial}
are successful in producing accurate geometry,
but fail at faithfully reproducing
the topology
of the method by Robins et al. \cite{robins10},
which follows the steepest descent
and is regarded as the \emph{de facto} standard
for computing discrete gradient fields.
These discrepancies arise in many different scenarios
and may affect the conclusions drawn
from topological analyses of data sets.

This leads us in the following
to investigating alternatives to the probabilistic methods
in order to find solutions that provide accurate geometry and consistent topology at the same time.

\section{Effect of Sampling Methods}
\label{section-EffectOfSampling}

\resetlength{\mygridwidth}{\ltwopicwidth}
\begin{figure}[t]%
\begin{subfigure}[t]{\mygridwidth}%
\includegraphics[width=\mygridwidth]{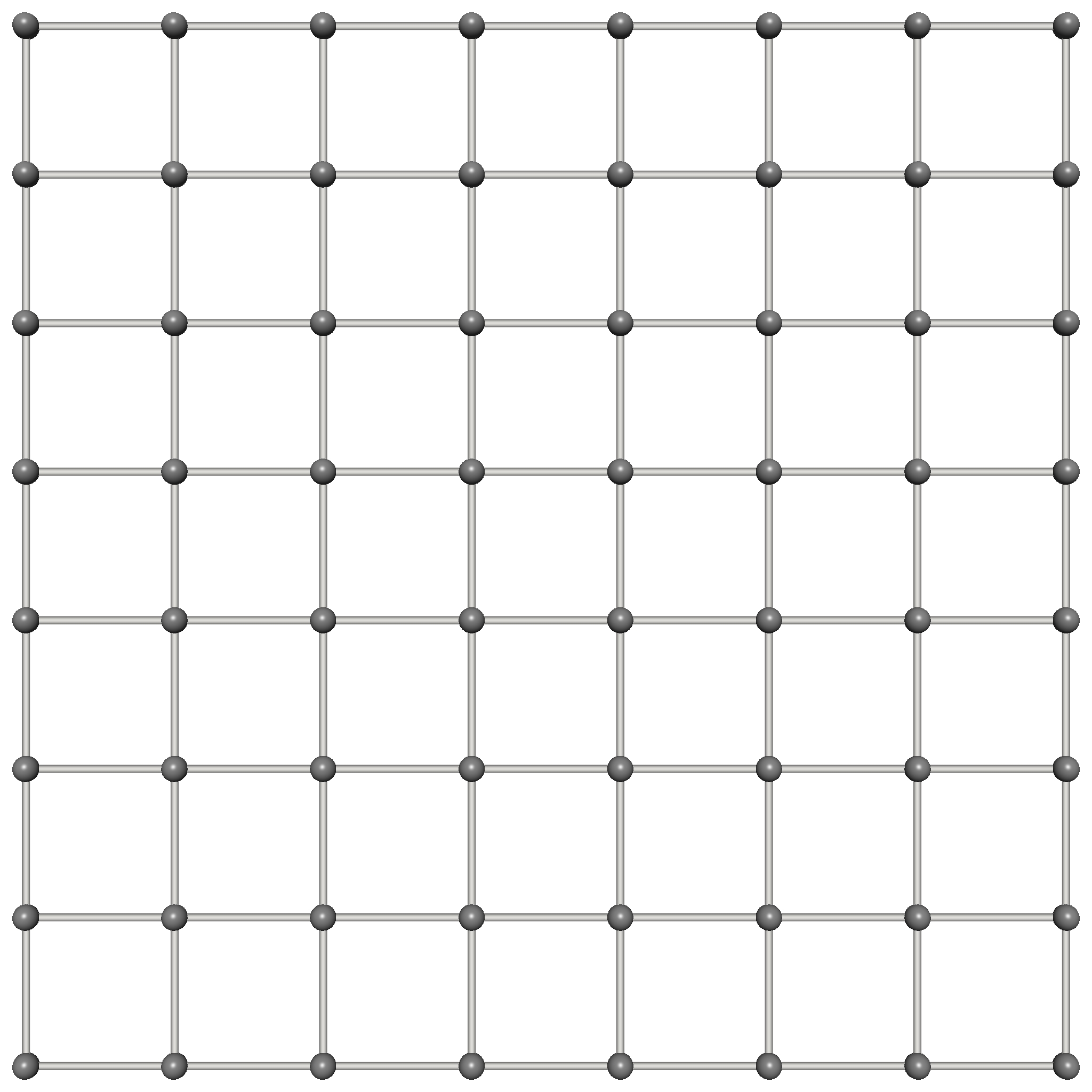}%
\caption{Uniform grid.}%
\label{fig-LatticeGrid}%
\end{subfigure}
\hfill%
\begin{subfigure}[t]{\mygridwidth}%
\includegraphics[width=\mygridwidth]{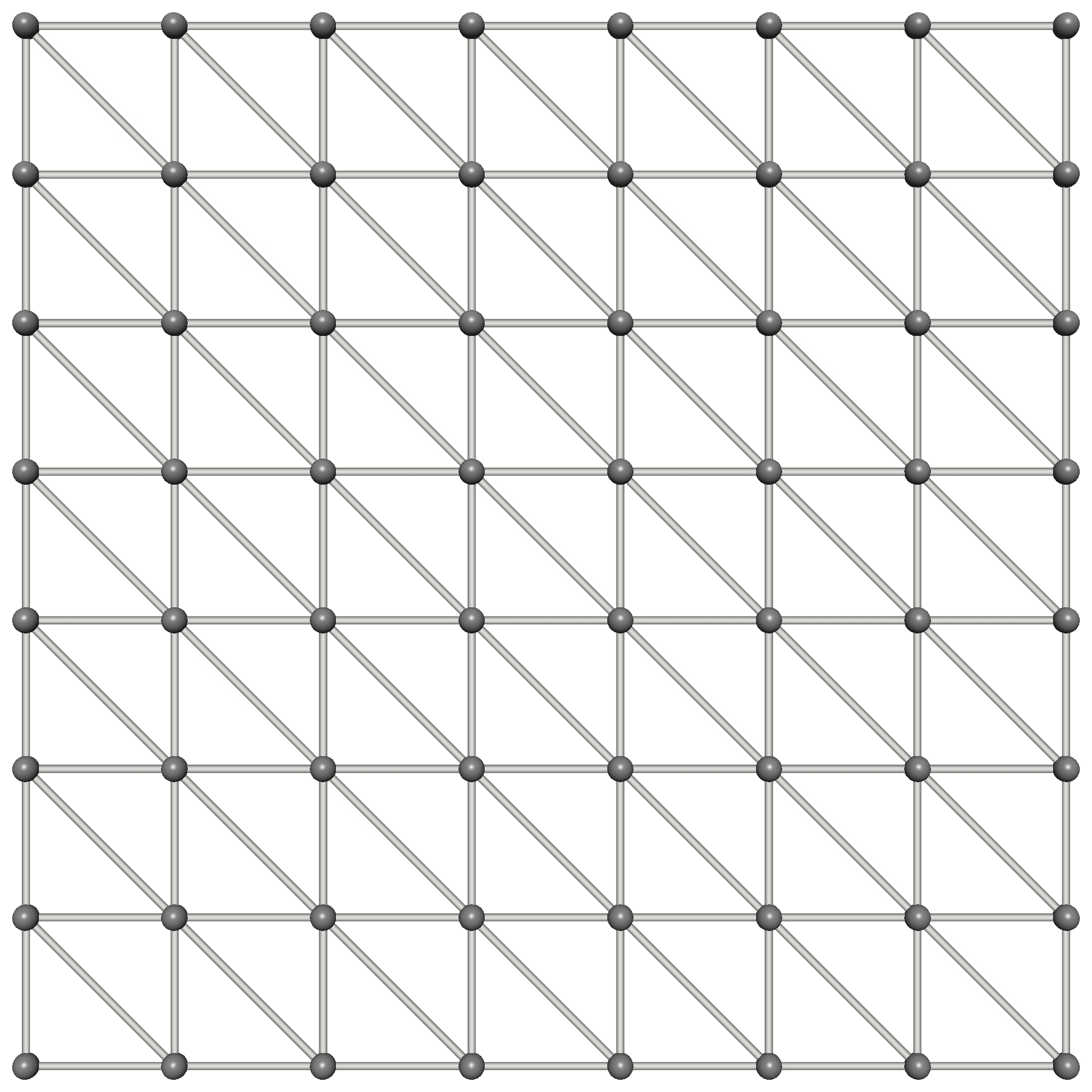}%
\caption{Triangle grid obtained by diagonally dividing each cell of a uniform grid once.}%
\label{fig-InducedGrid}%
\end{subfigure}%
\\%
\begin{subfigure}[t]{\mygridwidth}%
\includegraphics[width=\mygridwidth]{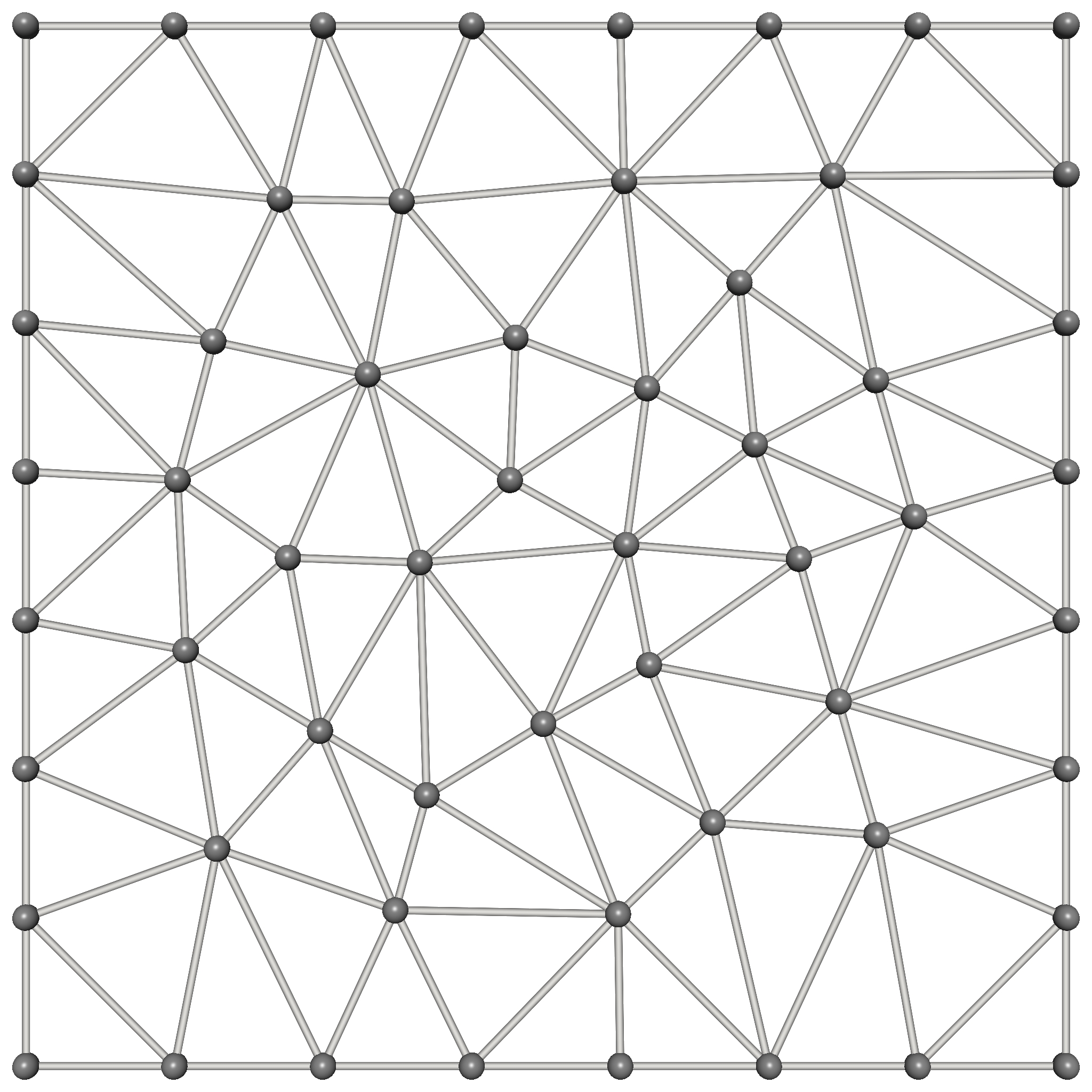}%
\caption{Triangle grid with random vertex positions and Delaunay triangulation.}%
\label{fig-RandomGrid}%
\end{subfigure}%
\hfill%
\begin{subfigure}[t]{\mygridwidth}%
\includegraphics[width=\mygridwidth]{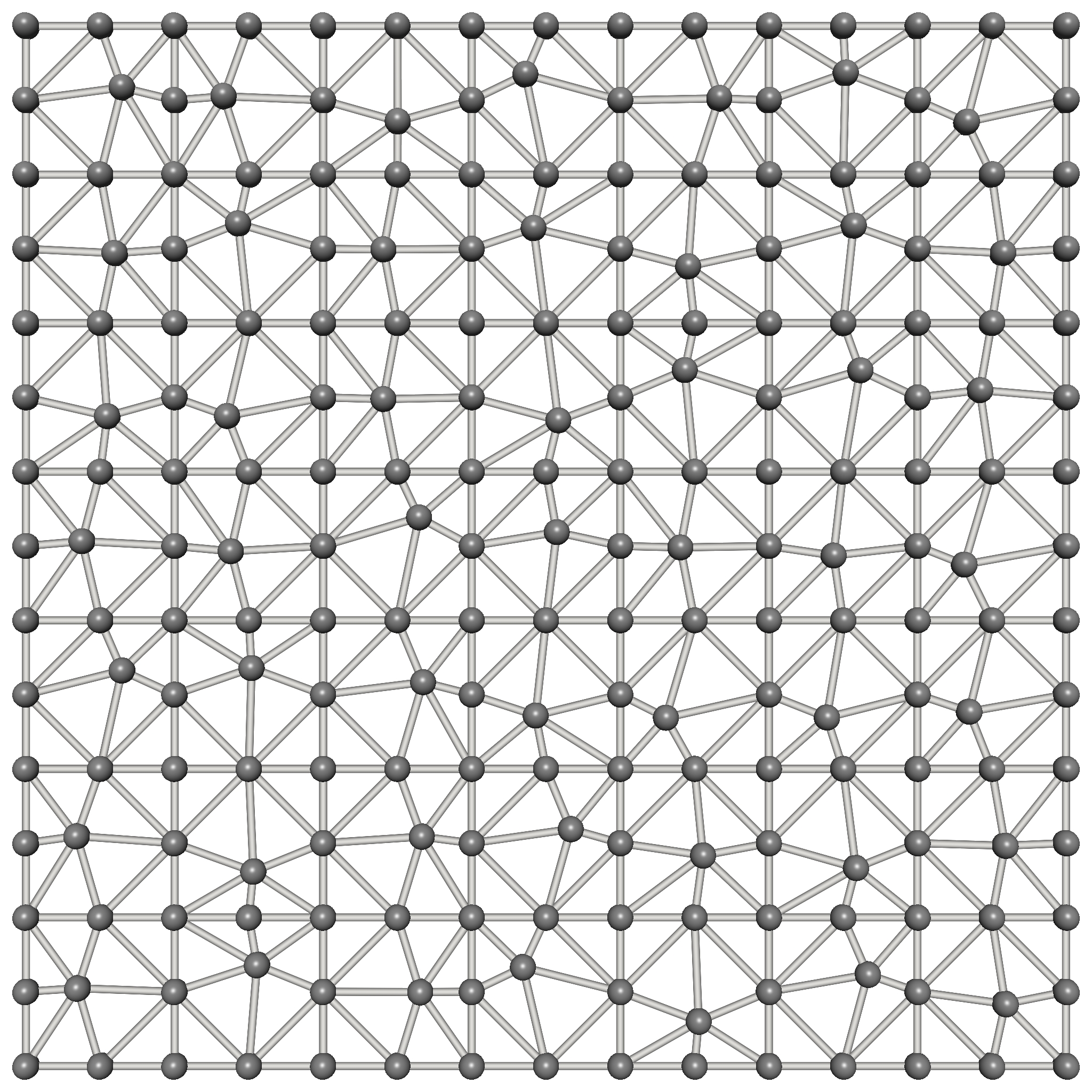}%
\caption{Triangle grid obtained by dividing each cell of a uniform grid into several triangles, including additional vertices obtained through bilinear interpolation.}%
\label{fig-ModifiedGrid}%
\end{subfigure}%
\caption{Different grid types as used in this paper.}%
\label{fig-AllGridTypes}%
\end{figure}

\pgfplotstableread{tikzImageData/uniformdirection.dat}		{\uniformdirection}
\pgfplotstableread{tikzImageData/uniformdegree.dat}   		{\uniformdegree}
\pgfplotstableread{tikzImageData/triangledirection.dat}		{\triangledirection}
\pgfplotstableread{tikzImageData/triangledegree.dat}   		{\triangledegree}
\pgfplotstableread{tikzImageData/poissondirection.dat}		{\poissondirection}
\pgfplotstableread{tikzImageData/poissondegree.dat}   		{\poissondegree}
\pgfplotstableread{tikzImageData/suggestiondirection.dat}	{\suggestiondirection}
\pgfplotstableread{tikzImageData/suggestiondegree.dat}   	{\suggestiondegree}

\newcommand{\StatPlot}[3]
{%
\tikzsetnextfilename{#1Direction}%
\begin{tikzpicture}[font=\scriptsize]%
\begin{axis}
[
	ybar,
	bar width=5pt,
	ylabel={number of edges},
	ymin=0,
	xtick={0, 45, 90, 135, 180},
	xlabel={grid edge direction},
	scale=0.45
]
\addplot[ybar, fill=blue!20] table[x=degree, y=y]{#2};
\end{axis}
\end{tikzpicture}%
\hfill%
\tikzsetnextfilename{#1Degree}%
\begin{tikzpicture}[font=\scriptsize]%
\begin{axis}
[
	ybar,
	bar width=7.5pt,
	ylabel={number of vertices},
	ymin=0,
	xtick={2, 3, ..., 10},
	xmax=11,
	xlabel={vertex degree},
	scale=0.45
]
\addplot[ybar, fill=blue!20] table[x=x, y=y]{#3};
\end{axis}
\end{tikzpicture}%
}

\begin{figure}[!t]
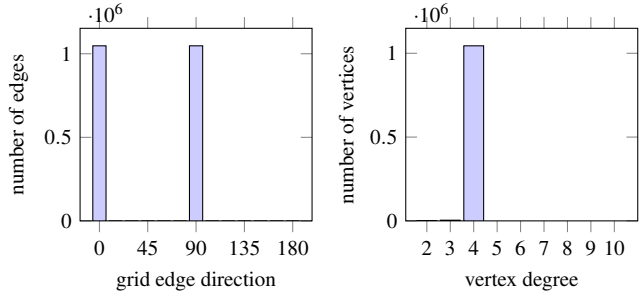
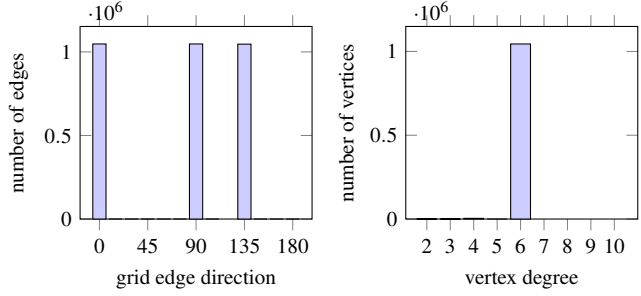
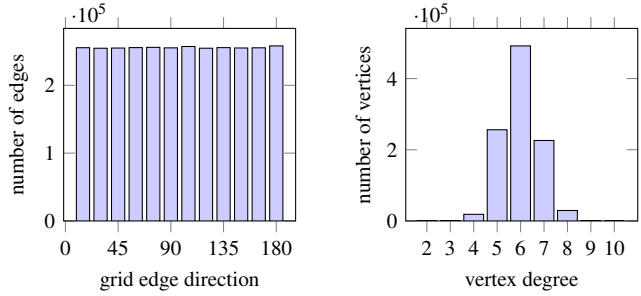
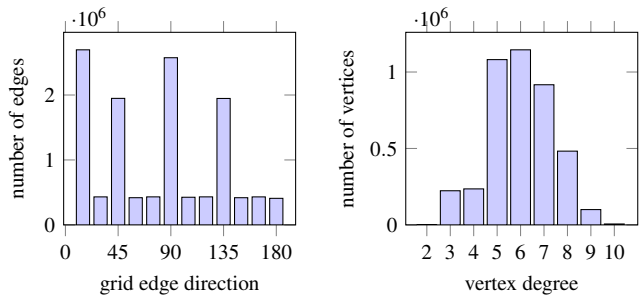
%
\begin{subfigure}{\linewidth}%
\StatPlot{UniformGridStats}{\uniformdirection}{\uniformdegree}%
\caption{Uniform grid.}%
\label{fig-UniformGridStats}%
\end{subfigure}%
\\[\baselineskip]%
\begin{subfigure}{\linewidth}%
\StatPlot{TriangleGridStats}{\triangledirection}{\triangledegree}%
\caption{Triangle grid obtained by diagonally dividing each cell of a uniform grid once as shown in \Cref{fig-InducedGrid}.}%
\label{fig-TriangleGridStats}%
\end{subfigure}%
\\[\baselineskip]%
\begin{subfigure}{\linewidth}%
\StatPlot{PoissonGridStats}{\poissondirection}{\poissondegree}%
\caption{Triangle grid with random vertex positions and Delaunay triangulation as shown in \Cref{fig-RandomGrid}.}%
\label{fig-PoissonGridStats}%
\end{subfigure}%
\\[\baselineskip]%
\begin{subfigure}{\linewidth}%
\StatPlot{SuggestionGridStats}{\suggestiondirection}{\suggestiondegree}%
\caption{Triangle grid derived from the uniform grid following our suggestion, as illustrated in \Cref{fig-ModifiedGrid} and discussed in \Cref{section-Suggestion}.}%
\label{fig-SuggestionGridStats}%
\end{subfigure}%
\caption{Statistics of different grids. Shown are the histogram of the grid edge directions (left) and the histogram of the vertex degrees (right).}
\label{fig-GridStats}
\end{figure}

In this section, we delve into the improvement in geometry that can be achieved by employing the steepest descent method of Robins et al. \cite{robins10}
on different grid types.
To this end,
we refer the reader
to \Cref{fig-AllGridTypes}
where we illustrate the types of grids that will be discussed
in this and the following section.

The pre-assumption of the probabilistic methods \cite{gyulassy12ms,reininghaus2012combinatorial} is that the data is given on a uniform grid. 
With a uniform grid, a cell has a limited number of higher dimensional cells to pair with during the calculation of the gradient vector field: 
a maximum of $4$ choices for the $2$D case, and a maximum of $6$ choices for the $3$D case. These numbers remain constant even when we increase the resolution of the grid.
Furthermore, due to the nature of the grid, the directions of the grid edges are either vertical or horizontal,
which limits the geometric expression of the separatrices drastically.
Hence, the steepest descent method \cite{robins10}
is not able to capture the geometry of the tangent curves on such grid.

It is interesting to investigate
the output of the steepest descent method
on other types of grids.
Specifically,
we want to investigate triangle grids
as they allow for a wider variety in the directions of the grid edges
as well as a higher degree of the vertices.

A straightforward approach is to divide each cell of a uniform grid into two triangles.
This fairly common choice of grid is illustrated in \Cref{fig-InducedGrid}.
It allows all non-boundary vertices
to have $6$ possible choices of direction
instead of just $4$.
Furthermore, it introduces diagonal grid edges.
\Cref{fig-TriangleGridStats}
shows the distributions of the vertex degrees
and of the grid edge directions
of a $1024 \times 1024$ grid of this kind.
The grid edge directions are measured
as the angle to the $x$-axis.
We used such a triangle grid
to sample the function of \Cref{formula-AnaFunc1}
and applied the steepest descent method.
The result is shown in \Cref{fig-AnaFunc1Triangular}.
We observe a slightly rounder shape for the circular part,
but overall the geometric embedding
is still quite different
compared to the continuous case from \Cref{fig-AnaFunc1Cont}.

Next,
we investigate
a triangle grid
with randomly placed vertices
as shown in \Cref{fig-RandomGrid}.
It is built
by uniformly placing vertices at the boundary
and then randomly placing the remaining vertices inside
using a Poisson disc sampling.
A Delaunay triangulation is then applied to obtain the mesh.
The statistics of this grid
are shown in \Cref{fig-PoissonGridStats}
and reveal
a high diversity
in terms of grid edge directions and vertex degrees.
Again,
we sampled \Cref{formula-AnaFunc1}
onto a $1024 \times 1024$ version of this grid
and computed the discrete gradient
using the steepest descent method.
The resulting Morse-Smale complex
(\Cref{fig-AnaFunc1RandomDelaunay})
comes remarkably close
to the continuous version
(\Cref{fig-AnaFunc1Cont}).
In fact,
in all our experiments,
this type of grid
exhibited
the most accurate geometry
of all grid types.

We repeated our experiments with another function $g$ sampled in the domain $[-2,2]^2$
\begin{align}
g(x,y) = \sin(xy) \cos(x+y).
\label{formula-AnaFunc2}
\end{align}
The results are shown in \Cref{fig-AnaFunc2Ms}.
Our previous observation
is confirmed:
the steepest descent method
does produce accurate geometry
on grids with high diversity
of the grid edge directions and vertex degrees.

We conclude that
when extracting the Morse-Smale complex in certain applications,
if possible,
a grid should be chosen
that allows for
a higher ``degree of freedom''
for the edge directions and vertex degrees.
The method by Robins et al. \cite{robins10}
will then be able to extract accurate geometry.

\section{Suggestion for Uniform Grids}
\label{section-Suggestion}

It is not always possible to choose a type of grid for an application.
Most applications come with pre-defined grids
and some of those will be uniform grids,
or very similar to uniform grids.
In this section, we propose a simple solution to 
achieve acceptable geometry for the Morse-Smale complex
given data sampled on a uniform grid
while 
preserving its topology through the steepest descent direction.

Our proposal draws inspiration from the observations in \Cref{section-EffectOfSampling}: 
better geometry can be achieved by allowing a higher number of options to choose from
during the computation of the discrete vector.
Thus, given a data set sampled on a uniform grid,
we create a triangle mesh
using well-chosen auxiliary vertices
to obtain the desired grid characteristics.

Our goal is to convert each uniform grid cell
into more than two triangles.
To do so,
we add one auxiliary vertex in the middle of each grid edge,
and another auxiliary vertex randomly into the interior of each grid cell.
The data values for these new vertices are obtained using bilinear interpolation.
A Delaunay triangulation within each grid cell then yields the final triangle mesh.
An example is shown in \Cref{fig-ModifiedGrid}.
Due to the random nature of the new points, 
spurious features can be created.
We apply an $\epsilon$-simplification
to cancel these unwanted features.
\changed{In our experiments, 
a persistence threshold of at most $0.001\%$ 
of the data range is applied.}

\Cref{fig-SuggestionGridStats}
shows the characteristics of this grid.
We can see that this grid covers all possible grid directions, albeit to a varying degree.
Some directions clearly dominate,
which stems from the fact
that this triangle grid
inherits the large number
of vertical and horizontal grid edges
from the original uniform grid.
Nevertheless,
the histograms show a vast improvement over the uniform grid.

We applied this procedure
to the uniform grids of size $1024 \times 1024$
that had been originally used to sample
\Cref{formula-AnaFunc1,formula-AnaFunc2}.
The resulting triangle meshes
have then be used to compute
the Morse-Smale complexes
via the steepest descent method.
The results are shown in
\Cref{fig-AnaFunc1MSSuggestion,fig-AnaFunc2MSSuggestion}, respectively.
Note how the geometric embedding
is much more aligned
with the continuous case
in comparison to the steepest descent method applied to the uniform grid.
In particular,
the circle in \Cref{fig-AnaFunc1MSSuggestion}
is well represented.
However, we can also see that
the diagonal separatrix in the lower-left corner
of the same figure
does not end at the corner of the domain,
but only nearby.

It remains to see
whether this new type of grid
gives rise to a Morse-Smale complex
that is equivalent to the one
obtained from the uniform grid
-- at least after an $\epsilon$-simplification.
We have reason to believe that this is true,
since we apply the steepest descent method
in both cases.
\changed{
Further, the original vertices are kept
in the new grid,
and the bilinearly interpolated values
of the new vertices
will always fall within 
the range of the original values.
Thus, the original minima remain the same, 
the saddle points and maxima are in the 
now refined lower star of the original vertex.
Although spurious minima, saddles, or maxima can 
be created due to the randomness of our suggestion,
they tend to be short-lived and can be canceled
at a very low persistence threshold as mentioned.
With more carefully chosen and optimized methods,
new points can be added without creating new criticals.
}

An experimental validation
is shown in \Cref{fig-Mat-sugg-cont},
where we show the Morse-Smale complexes
obtained using our suggested method
and using the continuous approach.
These should be compared also to
\Cref{fig-Mat1Interpolate}
where the steepest descent 
and the probabilistic method
have been applied to the original uniform grid.
We can clearly see
that our method avoids the topological issues of the probabilistic method,
while achieving a good geometric embedding.

\changed{
Our discussion in this section is limited to 
$2$D cases. Given a cubical grid, 
we suspect that a \emph{face-centered subdivision},
similar to the one by
Carr et al.\ \cite{carr2006}, could achieve similar 
results to ours. We left this for further investigations.
}

\added{
\paragraph{Increased computational effort}
Our suggestion leads to increased memory requirements and computation times.
Suppose that we add $a$ points at each edge and $b$ 
points inside each cell. 
For a grid of size $n \times n$, we added $2a(n-1)n$ 
points at the edges and $b(n-1)^2$ points in the cells,
a total of $(n-1)((2a+b)n - b)$ points in addition to the 
original $n^2$ points,
which is a substantial memory cost,
especially considering that
the newly created unstructured grid
requires explicit handling of the connectivity
whereas this could be handled implicitly for the original structured grid.
The computation time required to create 
such a triangulation is also non-negligible:
it takes in the order of tens of seconds for
larger grids such as $2048^2$.}

\added{Despite having higher computation time and memory requirements,
our method is able to provide Morse-Smale complexes
with accurate geometry and topology 
while being algorithmically less involved than the probabilistic variants.}

\begin{figure}%
\begin{subfigure}[t]{\ltwopicwidth}%
\includegraphics[width=\linewidth]{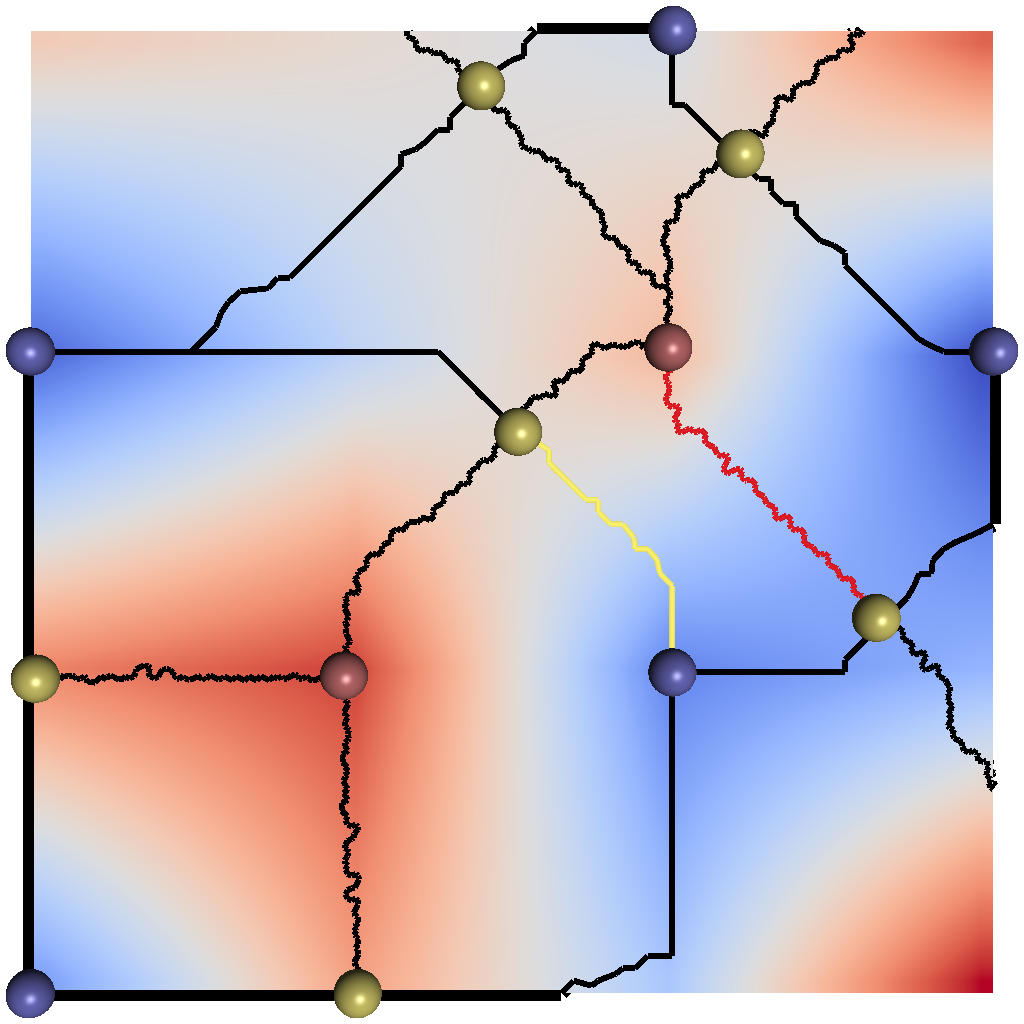}
\caption{Steepest descent method applied to our suggested triangle grid.}%
\label{fig-Mat-sugg}%
\end{subfigure}
\hfill%
\begin{subfigure}[t]{\ltwopicwidth}%
\includegraphics[width=\linewidth]{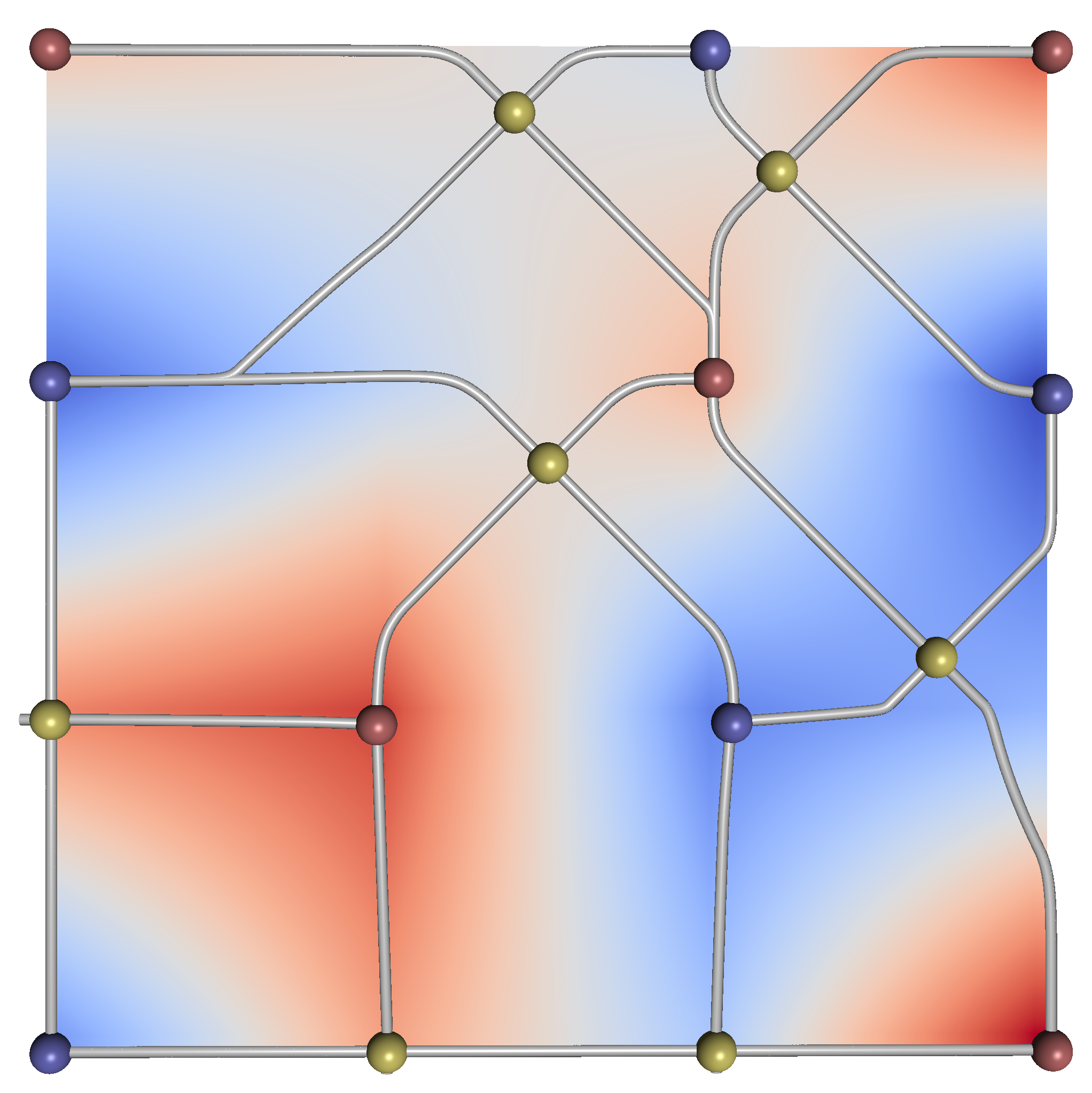}
\caption{Continuous Morse-Smale complex.}%
\label{fig-Mat-cont}%
\end{subfigure}
\caption{The discrete Morse-Smale complex obtained using the steepest descent method applied to our suggested triangle grid in comparison to the continuous Morse-Smale complex. The latter has a slightly different handling of the cases at the boundary, but otherwise the topologies coincide. Compare these results to \Cref{fig-Mat1Interpolate}.}%
\label{fig-Mat-sugg-cont}%
\end{figure}

\newlength{\anafunctwoimgwidth}
\setlength{\anafunctwoimgwidth}{0.32\textwidth}

\begin{figure*}[t]%
\begin{subfigure}[t]{0.985\anafunctwoimgwidth}%
\includegraphics[width=\linewidth]{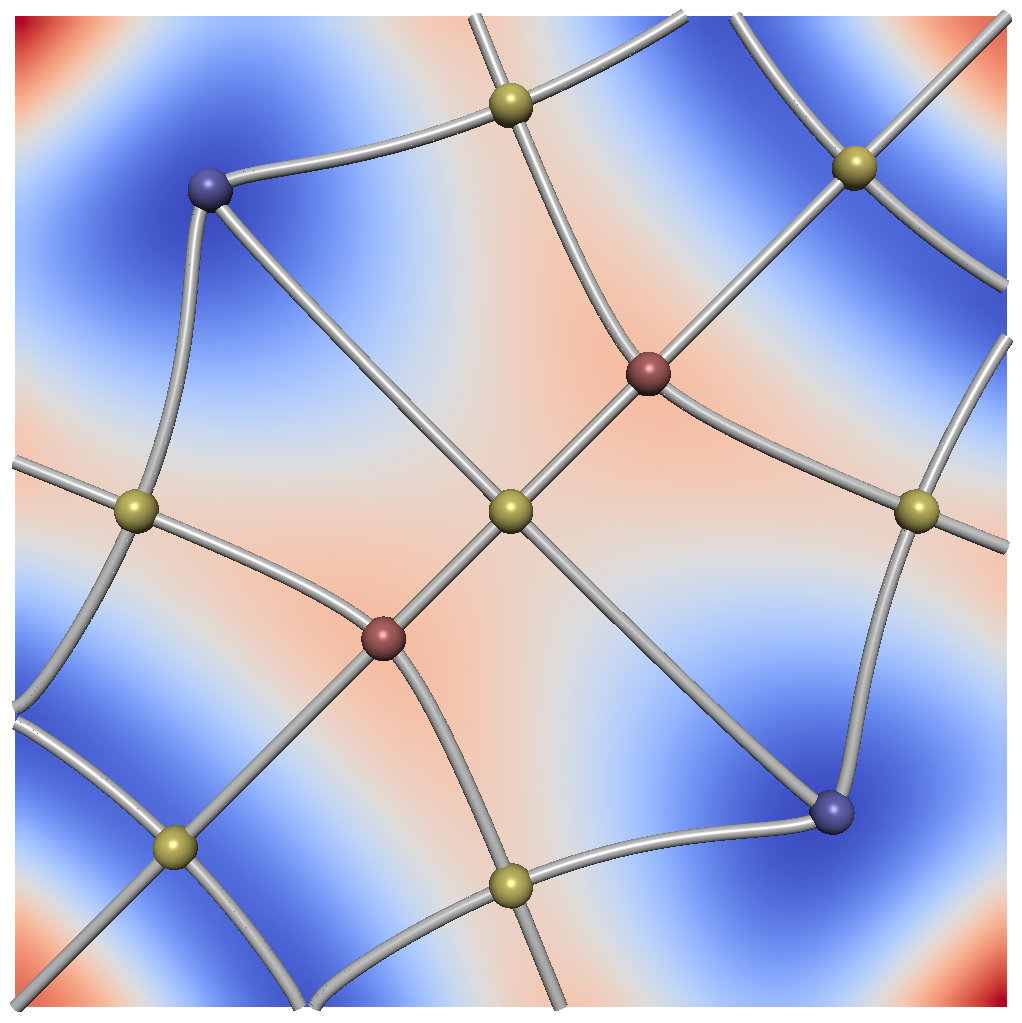}%
\caption{Continuous topology.}%
\label{fig-AnaFunc2Cont}%
\end{subfigure}%
\hfill%
\begin{subfigure}[t]{\anafunctwoimgwidth}%
\includegraphics[width=\linewidth]{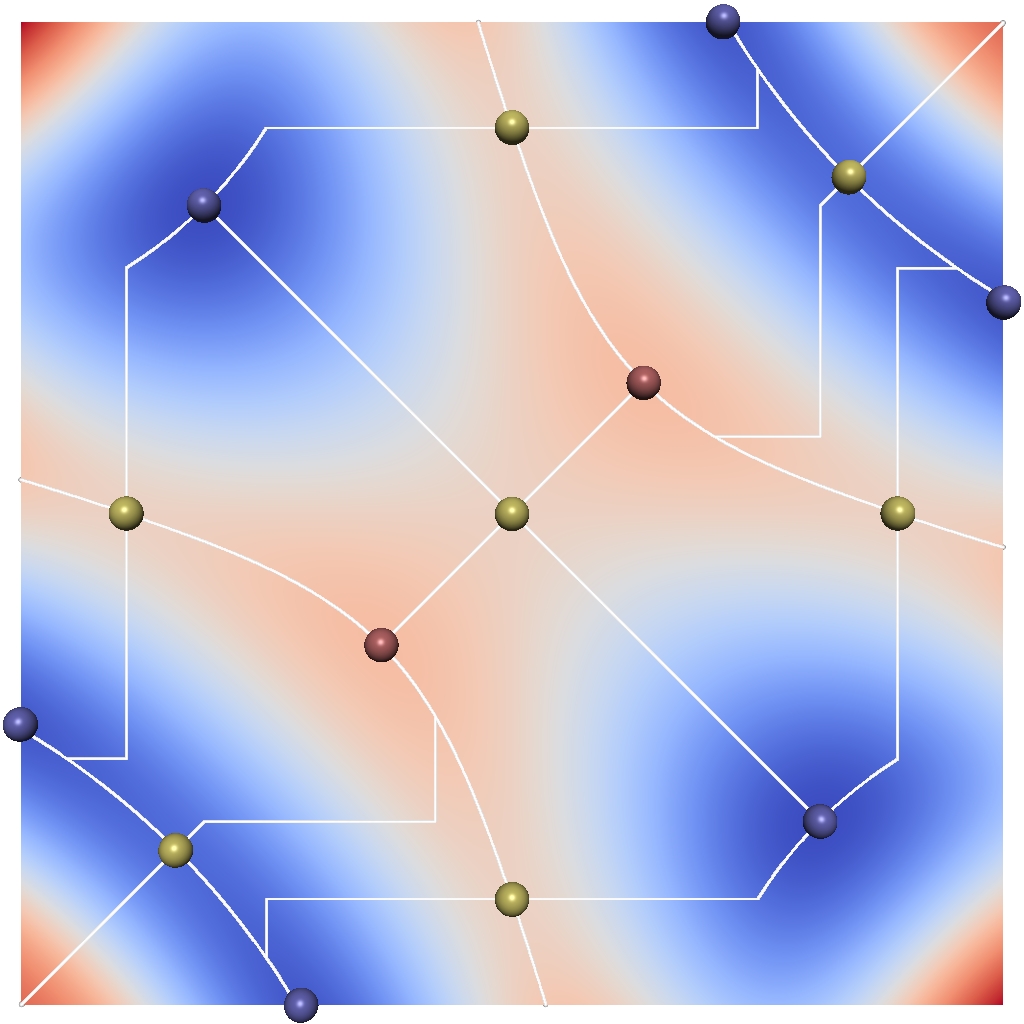}%
\caption{Steepest descent method \cite{robins10} on the uniform grid.}%
\label{fig-AnaFunc2Robins}%
\end{subfigure}%
\hfill%
\begin{subfigure}[t]{\anafunctwoimgwidth}%
\includegraphics[width=\linewidth]{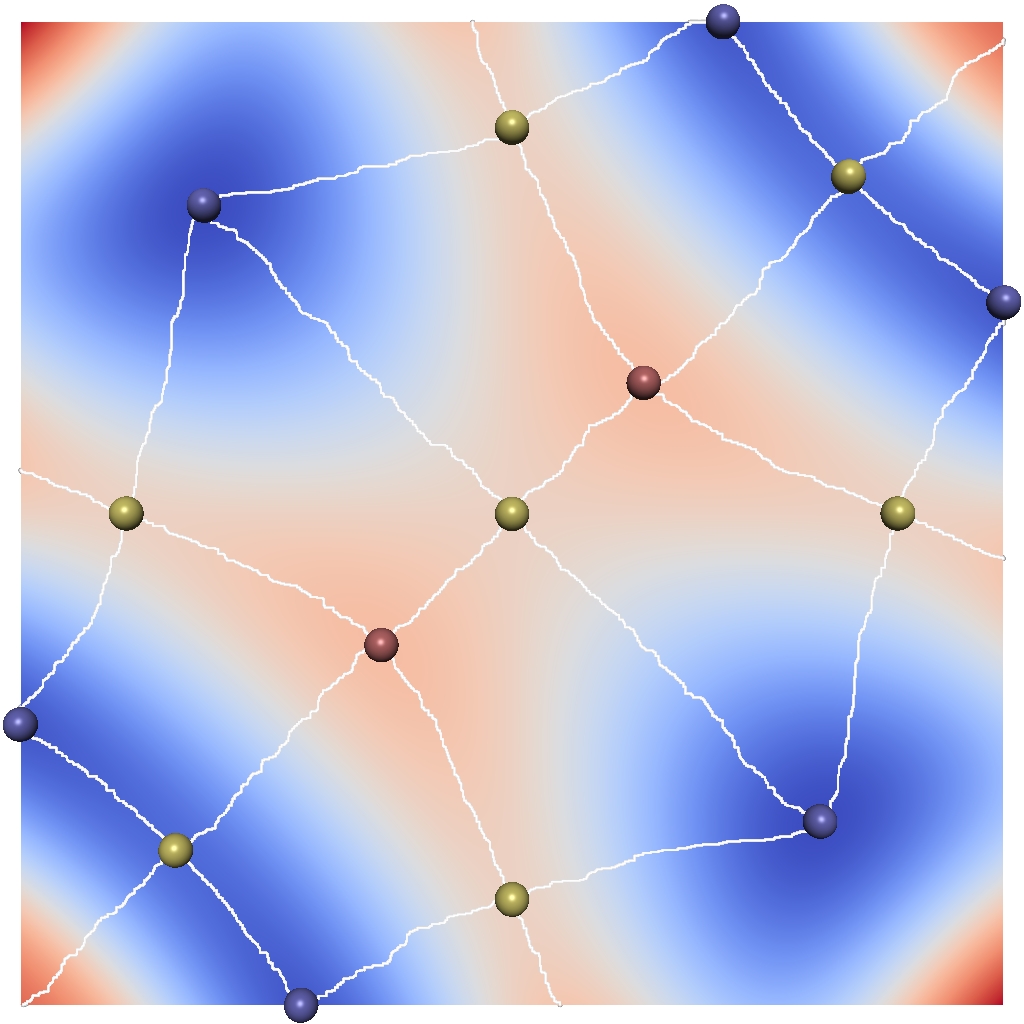}%
\caption{Probabilistic method \cite{gyulassy12ms, reininghaus2012combinatorial} on the uniform grid.}%
\label{fig-AnaFunc2Jans}%
\end{subfigure}%
\\%
\begin{subfigure}[t]{0.985\anafunctwoimgwidth}%
\includegraphics[width=\linewidth]{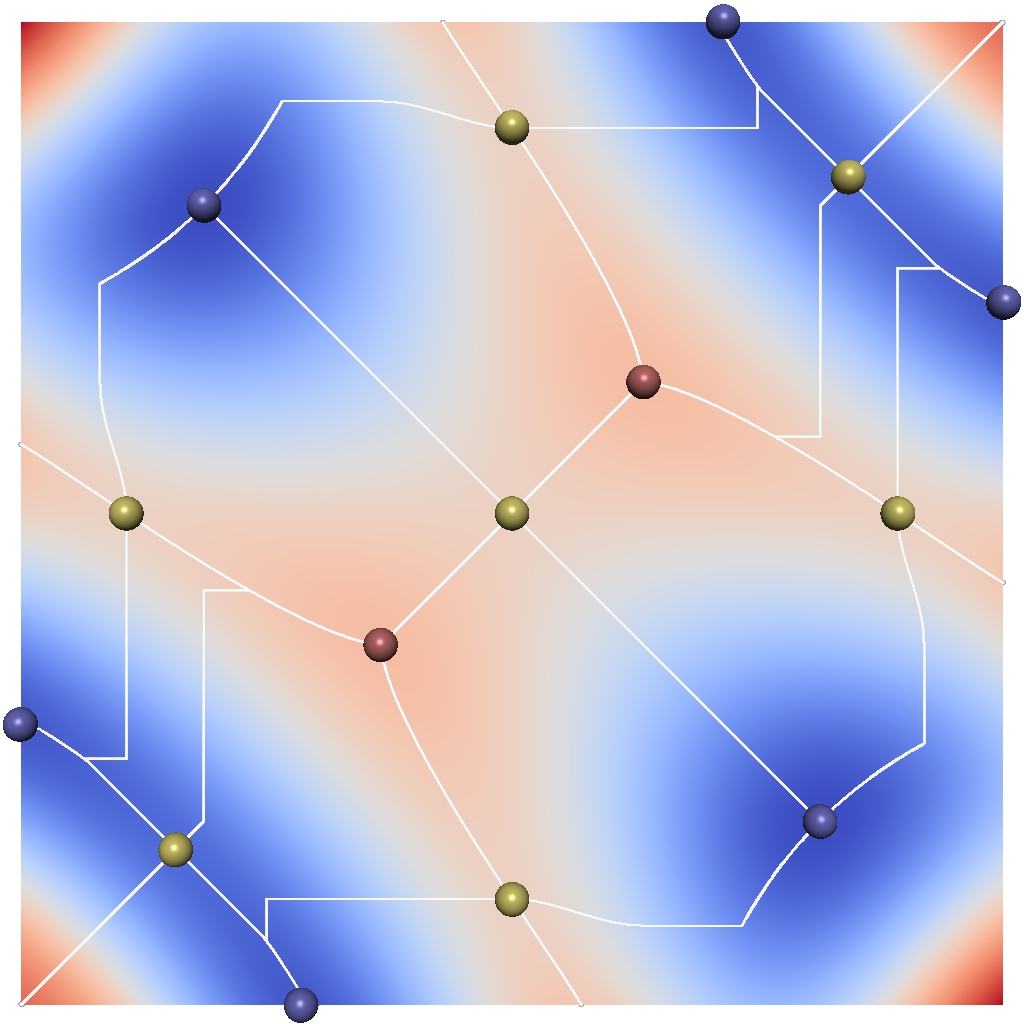}%
\caption{Steepest descent method \cite{robins10} on a triangle grid obtained by diagonally dividing each cell of a uniform grid once as shown in \Cref{fig-InducedGrid}.}%
\label{fig-AnaFunc2Triangular}%
\end{subfigure}%
\hfill%
\begin{subfigure}[t]{\anafunctwoimgwidth}%
\includegraphics[width=\linewidth]{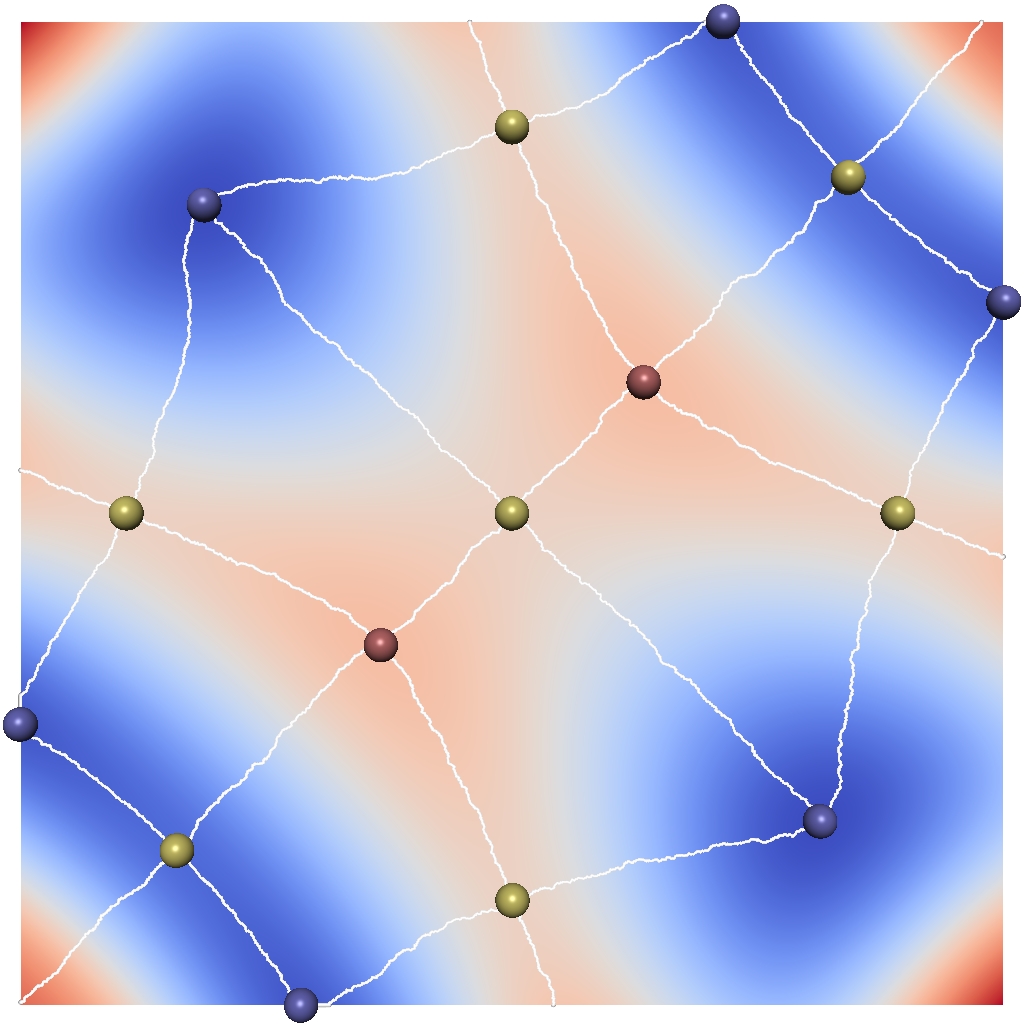}%
\caption{Steepest descent method \cite{robins10} on a triangle grid with random vertex positions and Delaunay triangulation as shown in \Cref{fig-RandomGrid}.}%
\label{fig-AnaFunc2RandomDelaunay}%
\end{subfigure}%
\hfill%
\begin{subfigure}[t]{\anafunctwoimgwidth}%
\includegraphics[width=\linewidth]{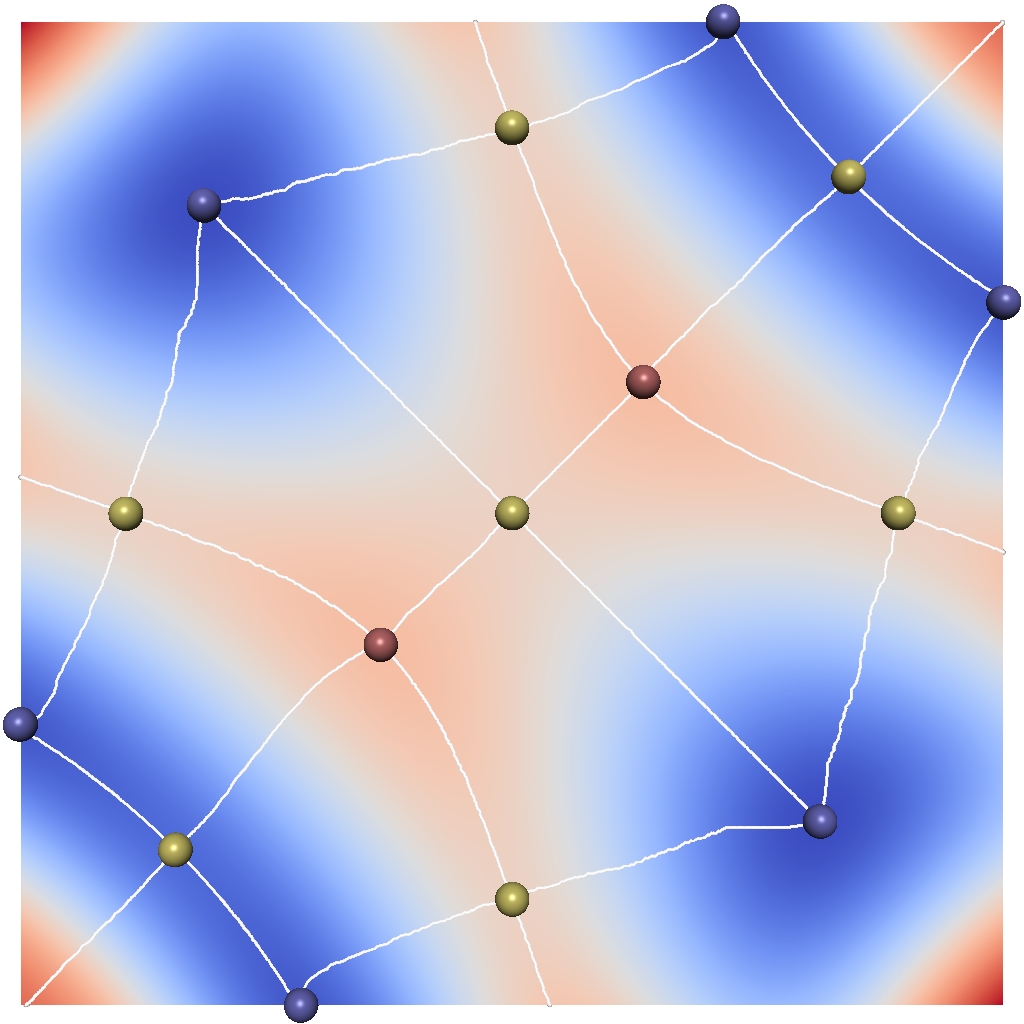}%
\caption{Steepest descent method \cite{robins10} on a triangle grid derived from the uniform grid following our suggestion, as illustrated in \Cref{fig-ModifiedGrid} and discussed in \Cref{section-Suggestion}.}%
\label{fig-AnaFunc2MSSuggestion}%
\end{subfigure}%
\caption{The function from \Cref{formula-AnaFunc2} has been sampled on different types of grids and the Morse-Smale complex has been extracted using different methods. Qualitatively, we see the same results here as in \Cref{fig-AnaFunc1MS}, namely that \emph{accurate geometry} can be achieved with different methods, most notably also with the steepest descent method if the right kind of grid is provided.}%
\label{fig-AnaFunc2Ms}%
\end{figure*}

\section{Conclusion}
\label{section-outro}

We discussed the different aspects that can affect the quality of the Morse-Smale complex in terms of geometry and topology. 
We have shown that the undesired shape of the Morse-Smale complex using the traditional methods comes from the low number of choices for the 
vector paring during the computation. By allowing a higher number of options for the vectors through a more optimized grid structure,  
we can attain a Morse-Smale complex that closely resembles its continuous version.
It is thus encouraged, if possible, to take this into account when sampling the data.

However, not all applications allow such sampling methods
and we need to employ methods for accurate geometry.
While it is known that the existing
methods produce geometrically different embeddings for the separatrices,
we have shown that there are also unintended consequences in terms of 
topological changes.
We have shown that the $1$-saddles can take different positions,
which can lead to significant alterations 
in the overall structure of the Morse-Smale complex.
However,
we did not investigate all methods that fall under the category of \emph{accurate geometry},
e.g., the work by Gyulassy et al.\ \cite{gyulassy19ms}.
We leave this to future work.

We also proposed a method to achieve better geometry for the Morse-Smale complex given data sampled on a uniform grid, drawing inspiration from our observations. 
Compared to the other methods aimed at tackling the geometry problem, our suggestion is significantly less complicated. This method required no modification 
on the algorithm side, as we only added more information to the given data. Therefore, our suggestion does not increase the complexity of these methods,
albeit it does increase the memory requirements
\changed{and computation times}.
Despite its simplicity, our method produced comparable results in terms of geometry while retaining the topological features inherited from the steepest descent method.

\acknowledgments{
This work was supported through grants from The Swedish Research Council (Vetenskapsrådet, project 2020-05461), and the Swedish e-Science Research Centre (SeRC).
The visualizations were created using the Inviwo framework \cite{inviwo2019}.
The Hurricane Isabel data set was produced by the Weather Research and Forecast (WRF) model, courtesy of NCAR, and the U.S. National Science Foundation (NSF).}

\bibliographystyle{abbrv-doi}

\bibliography{Literature}
\end{document}